\journal{Computational Geometry}
\newcommand{\etal}{et al.\xspace}
\newcommand{\vl}{\ensuremath{v_{\textsc{l}}}}
\newcommand{\vr}{\ensuremath{v_{\textsc{r}}}}
\newcommand{\free}{\ensuremath{\mathcal{F}}}
\newcommand{\nf}{\ensuremath{\overline{F}}}
\newtheorem{definition}{Definition}
\newtheorem{theorem}[definition]{Theorem}
\newtheorem{lemma}[definition]{Lemma}
\newtheorem{corollary}[definition]{Corollary}
\begin{document}

\begin{frontmatter}

\title{Flips in Edge-Labelled Pseudo-Triangulations\tnoteref{mytitlenote}}
\tnotetext[mytitlenote]{The results in this paper appeared in CCCG as an extended abstract, and are part of the second author's PhD thesis. This work was partially supported by NSERC.}

\author{Prosenjit Bose} \ead{jit@scs.carleton.ca}
\author{Sander Verdonschot} \ead{sander@cg.scs.carleton.ca}
\address{School of Computer Science, Carleton University, Ottawa}

\begin{abstract}
 We show that $O(n^2)$ exchanging flips suffice to transform any edge-labelled pointed pseudo-triangulation into any other with the same set of labels. By using insertion, deletion and exchanging flips, we can transform any edge-labelled pseudo-triangulation into any other with $O(n \log c + h \log h)$ flips, where $c$ is the number of convex layers and $h$ is the number of points on the convex hull.
\end{abstract}

\begin{keyword}
 edge flip, diagonal flip, pseudo-triangulation, edge label
\end{keyword}

\end{frontmatter}

\section{Introduction}

A \emph{pseudo-triangle} is a simple polygon with three convex interior angles, called \emph{corners}, that are connected by reflex chains. Given a set $P$ of $n$ points in the plane, a \emph{pseu\-do-tri\-an\-gu\-la\-tion} of $P$ is a subdivision of its convex hull into pseudo-triangles, using all points of $P$ as vertices (see~Figure~\ref{fig:el-pt}). A pseu\-do-tri\-an\-gu\-la\-tion is \emph{pointed} if all vertices are incident to a reflex angle in some face (including the outer face; see Figure~\ref{fig:el-pointed-pt} for an example). Pseu\-do-tri\-an\-gu\-la\-tions find applications in areas such as kinetic data structures~\cite{kirkpatrick2002kinetic} and rigidity theory~\cite{streinu2005pseudo}. More information on pseu\-do-tri\-an\-gu\-la\-tions can be found in a survey by Rote, Santos, and Streinu~\cite{rote2007pseudotriangulations}.

\begin{figure}[htb]
 \centering
 \begin{subfigure}[b]{0.48\textwidth}
  \centering
  \includegraphics{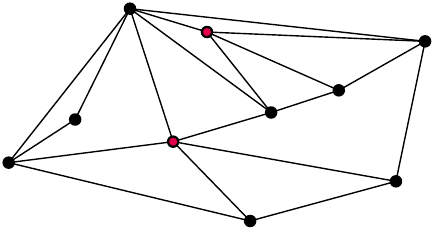}
  \caption{}
  \label{fig:el-pt}
 \end{subfigure}
 \begin{subfigure}[b]{0.48\textwidth}
  \centering
  \includegraphics{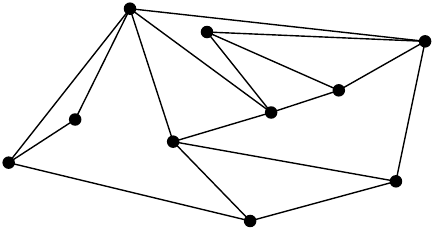}
  \caption{}
  \label{fig:el-pointed-pt}
 \end{subfigure}
 \caption{(a) A pseu\-do-tri\-an\-gu\-la\-tion with two non-pointed vertices. (b) A pointed pseu\-do-tri\-an\-gu\-la\-tion.}
 \vspace{-1em}
\end{figure}

Since a regular triangle is also a pseudo-triangle, pseu\-do-tri\-an\-gu\-la\-tions generalize triangulations (subdivisions of the convex hull into triangles). In a triangulation, a flip is a local transformation that removes one edge, leaving an empty quadrilateral, and inserts the other diagonal of that quadrilateral. Note that this is only allowed if the quadrilateral is convex, otherwise the resulting graph would be non-plane. Lawson~\cite{lawson1972transforming} showed that any triangulation with $n$ vertices can be transformed into any other with $O(n^2)$ flips, and Hurtado, Noy, and Urrutia~\cite{hurtado1999flipping} gave a matching $\Omega(n^2)$ lower bound.

\begin{figure}[htb]
 \centering
 \includegraphics{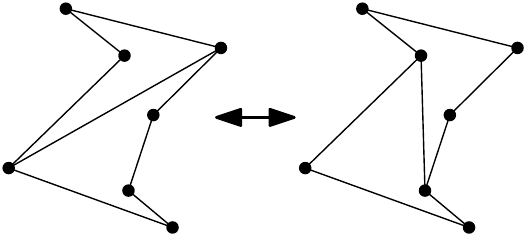}
 \caption{A flip in a pseudo-quadrilateral.}
 \label{fig:el-flip}
\end{figure}

Pointed pseu\-do-tri\-an\-gu\-la\-tions support a similar type of flip, but before we can introduce this, we need to generalize the concept of pseudo-triangles to \emph{pseudo-$k$-gons}: weakly simple polygons\footnote{A weakly simple polygon is a plane graph with a bounded face that is incident to all edges.} with $k$ convex interior angles. A diagonal of a pseudo-$k$-gon is called a \emph{bitangent} if the pseudo-$k$-gon remains pointed after insertion of the diagonal. In a pointed pseu\-do-tri\-an\-gu\-la\-tion, \emph{flipping} an edge removes the edge, leaving a pseudo-quadrilateral, and inserts the unique other bitangent of the pseudo-quadrilateral (see Figure~\ref{fig:el-flip}). In contrast with triangulations, all internal edges of a pointed pseu\-do-tri\-an\-gu\-la\-tion are flippable. Bereg~\cite{bereg2004transforming} showed that $O(n \log n)$ flips suffice to transform any pointed pseu\-do-tri\-an\-gu\-la\-tion into any other.

Aichholzer~\etal~\cite{aichholzer2003pseudotriangulations} showed that the same result holds for all pseu\-do-tri\-an\-gu\-la\-tions (including triangulations) if we allow two more types of flips: \emph{insertion} and \emph{deletion} flips. As the name implies, these either insert or delete one edge, provided that the result is still a pseu\-do-tri\-an\-gu\-la\-tion. To disambiguate, they call the other flips \emph{exchanging} flips. In a later paper, this bound was refined to $O(n \log c)$~\cite{aichholzer2006transforming}, where $c$ is the number of convex layers of the point set.

There is recent interest in \emph{edge-labelled} triangulations: triangulations where each edge has a unique label, and flips reassign the label of the flipped edge to the new edge. Araujo-Pardo~\etal~\cite{araujo2015colorful} studied the flip graph of edge-labelled triangulations of a convex polygon, proving that it is still connected and in fact covers the regular flip graph. They also fully characterized its automorphism group. Independently, Bose~\etal~\cite{bose2013flipping} showed that it has diameter $\Theta(n \log n)$.

Bose~\etal also considered edge-labelled versions of other types of triangulations. In many cases, the flip graph is disconnected. For example, it is easy to create a triangulation on a set of points with an edge that can never be flipped. Two edge-labelled triangulations in which such an edge has different labels can therefore never be transformed into each other with flips. On the other hand, Bose~\etal showed that the flip graph of edge-labelled combinatorial triangulations is connected and has diameter $\Theta(n \log n)$.

In this paper, we investigate flips in \emph{edge-la\-belled pseu\-do-tri\-an\-gu\-la\-tions}: pseu\-do-tri\-an\-gu\-la\-tions where each internal edge has a unique label in $\{1, \ldots, 3n - 3 - 2h\}$, where $h$ is the number of vertices on the convex hull ($3n - 3 - 2h$ is the number of internal edges in a triangulation). In the case of an exchanging flip, the new edge receives the label of the old edge. For a deletion flip, the edge and its label are removed, and for an insertion flip, the new edge receives an unused label from the set of all possible labels.

In contrast with the possibly disconnected flip graph of edge-labelled geometric triangulations, we show that $O(n^2)$ exchanging flips suffice to transform any edge-la\-belled pointed pseu\-do-tri\-an\-gu\-la\-tion into any other with the same set of labels. By using all three types of flips -- insertion, deletion and exchanging -- we can transform any edge-la\-belled pseu\-do-tri\-an\-gu\-la\-tion into any other with $O(n \log c + h \log h)$ flips, where $c$ is the number of convex layers of the point set, and $h$ is the number of vertices on the convex hull. In both settings, we have an $\Omega(n \log n)$ lower bound, making the second result tight.

\section{Pointed pseu\-do-tri\-an\-gu\-la\-tions}
\label{sec:el-pointed}

In this section, we show that every edge-la\-belled pointed pseu\-do-tri\-an\-gu\-la\-tion can be transformed into any other with the same set of labels by $O(n^2)$ exchanging flips. We do this by showing how to transform a given edge-la\-belled pointed pseu\-do-tri\-an\-gu\-la\-tion into a \emph{canonical} one. The result then follows by the reversibility of flips. 

\begin{figure}[htb]
 \centering
 \includegraphics{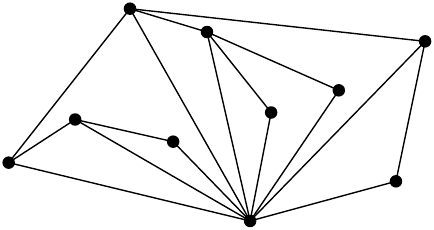}
 \caption{A left-shelling pseu\-do-tri\-an\-gu\-la\-tion.}
 \label{fig:el-left-shelling}
\end{figure}

Before we can start the proof, we need a few more definitions. Given a set of points in the plane, let $v_0$ be the point with the lowest $y$-coordinate, and let $v_1, \ldots, v_n$ be the other points in clockwise order around $v_0$. The \emph{left-shelling} pseu\-do-tri\-an\-gu\-la\-tion is the union of the convex hulls of $v_0, \ldots, v_i$, for all $2 \leq i \leq n$ (see Figure~\ref{fig:el-left-shelling}). Thus, every vertex after $v_1$ is associated with two edges: a \emph{bottom} edge connecting it to $v_0$ and a \emph{top} edge that is tangent to the convex hull of the earlier vertices. The \emph{right-shelling} pseu\-do-tri\-an\-gu\-la\-tion is similar, with the vertices added in counter-clockwise order instead.

As canonical pseu\-do-tri\-an\-gu\-la\-tion, we use the left-shelling pseu\-do-tri\-an\-gu\-la\-tion, with the bottom edges labelled in clockwise order around $v_0$, followed by the internal top edges in the same order (based on their associated vertex). Since we can transform any pointed pseu\-do-tri\-an\-gu\-la\-tion into the left-shelling pseu\-do-tri\-an\-gu\-la\-tion with $O(n \log n)$ flips~\cite{bereg2004transforming}, the main part of the proof lies in reordering the labels of a left-shelling pseu\-do-tri\-an\-gu\-la\-tion. We use two tools for this, called a \emph{sweep} and a \emph{shuffle}, that are each implemented by a sequence of flips. A sweep interchanges the labels of some internal top edges with their respective bottom edges, while a shuffle permutes the labels on all bottom edges.

\begin{lemma}
\label{lem:el-sort-left-shelling}
We can transform any left-shelling pseu\-do-tri\-an\-gu\-la\-tion into the canonical one with $O(1)$ shuffle and sweep operations.
\end{lemma}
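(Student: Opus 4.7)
The plan is to proceed in two phases: first put every label on the correct side of the pseudo-triangulation (bottom or top) using one shuffle and one sweep, and then sort each side into canonical order using two more shuffles and two more sweeps. In total this uses $O(1)$ operations.

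Let $B^*$ and $T^*$ denote the label sets that the canonical labeling assigns to internal bottom and internal top edges, respectively. Call a label \emph{misplaced} if it currently lies on an edge of the opposite type from where the canonical labeling puts it. Because the number of internal bottom edges and the number of internal top edges are both fixed, the two sides hold equally many misplaced labels. The first shuffle permutes the bottom labels so that every bottom edge whose paired top edge carries a misplaced label also carries a misplaced label (this is possible because a shuffle realises an arbitrary permutation of the bottom labels). A single sweep on exactly those pairs then deposits every label of $B^*$ on some bottom edge and every label of $T^*$ on some top edge.

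Now each side holds the correct label set, but possibly out of order. I would shuffle the bottom into its canonical order, then sweep all sweepable pairs; this lifts the (already correctly ordered) $B^*$ labels to the top and brings the $T^*$ labels down to the bottom. A third shuffle reorders these $T^*$ labels on the bottom so that each sits below the top edge where it should canonically appear, and one final sweep of all sweepable pairs simultaneously places the $T^*$ labels in their canonical top positions and returns the $B^*$ labels to their canonical bottom positions. This completes the transformation with three shuffles and three sweeps.

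The main obstacle I expect is handling edges where only one of the bottom/top pair is internal: such pairs cannot be swept, so any label on the internal edge of such a pair is pinned. The balance argument used in the first shuffle has to be carried out over sweepable pairs only, and each shuffle must be chosen to place the canonical label directly on every pinned edge so that later sweeps do not disturb it. Once this bookkeeping is in place, the three-shuffle, three-sweep schedule above carries through and yields the claimed $O(1)$ bound.
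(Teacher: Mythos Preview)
Your proposal is correct and follows essentially the same approach as the paper: one shuffle--sweep pair to get each side holding its correct label set, then a shuffle to sort the bottom, and a sweep--shuffle--sweep round-trip to sort the top labels via the bottom, for three shuffles and three sweeps in total. The paper's write-up is terser and does not explicitly isolate the ``pinned'' bottom edges (those at convex-hull vertices, whose top partner is a hull edge and hence not sweepable); your final paragraph correctly identifies that each shuffle must fix the canonical labels on these non-sweepable positions, which is exactly the bookkeeping the paper's argument relies on implicitly.
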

\begin{proof}
In the canonical pseu\-do-tri\-an\-gu\-la\-tion, we call the labels assigned to bottom edges \emph{low}, and the labels assigned to top edges \emph{high}. In the first step, we use a shuffle to line up every bottom edge with a high label with a top edge with a low label. Then we exchange these pairs of labels with a sweep. Now all bottom edges have low labels and all top edges have high labels, so all that is left is to sort the labels. We can sort the low labels with a second shuffle. To sort the high labels, we sweep them to the bottom edges, shuffle to sort them there, then sweep them back.
\end{proof}

The remainder of this section describes how to perform a sweep and a shuffle with flips.

\begin{lemma}
\label{lem:el-degree-2-swap}
We can interchange the labels of the edges incident to an internal vertex $v$ of degree two with three exchanging flips.
\end{lemma}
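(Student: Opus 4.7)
Let $e_1 = va$ and $e_2 = vb$ be the two edges at $v$, with labels $\ell_1$ and $\ell_2$. The plan is to exhibit three exchanging flips that leave the edge set unchanged but interchange $\ell_1$ and $\ell_2$.

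I would begin with a local analysis. Since $v$ is pointed and internal, one of its two wedges is a convex corner of some pseudo-triangle $F_1$, and the other is the reflex angle of $v$ inside some pseudo-triangle $F_2$ whose reflex chain passes through $v$. Because $v$ has no other incident edges, both $va$ and $vb$ are shared between $F_1$ and $F_2$, and the reflex chain of $F_2$ contains the three consecutive vertices $a$, $v$, $b$ (in some order).

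A key observation is that any flip of an edge incident to $v$ must again produce an edge incident to $v$: the flip preserves the vertex set, and if the new edge avoided $v$ then $v$ would be left with degree $1$, which is impossible in a pseudo-triangulation. Thus each flip simply swaps one of $v$'s two edges for another. The proposed sequence is as follows. First, flip $e_1 = va$ and let $vx$ denote the resulting edge, which now carries label $\ell_1$. Second, flip $vb$; I claim this produces the original edge $va$, which then carries label $\ell_2$. Third, flip $vx$; I claim this produces the original edge $vb$, which then carries label $\ell_1$. After these three flips, the edges at $v$ are once again $va$ and $vb$, with their labels interchanged.

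The heart of the argument, and the main obstacle, is justifying the two claims in the second and third flips. After Flip~1 the underlying region $F_1 \cup F_2$ is unchanged, but is subdivided by $vx$ instead of $va$; when we delete $vb$ from this new configuration, $v$ is left dangling via $vx$, and the resulting pseudo-quadrilateral must again have both of its bitangents incident to $v$. A geometric analysis of the region $F_1 \cup F_2$, in which $v$ admits exactly three ``valid'' bitangents---namely $va$, $vb$, and $vx$---shows that the only bitangent distinct from $vb$ is $va$. The third flip is symmetric: after Flip~2 the region is subdivided by $va$ and $vx$, and the pseudo-quadrilateral formed by removing $vx$ has $vb$ as its other bitangent. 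Consequently, the three flips cyclically permute the edges $va \to vx \to vb \to va$ at $v$, while the label bookkeeping yields the desired swap.
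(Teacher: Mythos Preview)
Your proposal is correct and follows essentially the same approach as the paper: both arguments hinge on the fact that $v$ admits exactly three bitangents to the surrounding region, so that three exchanging flips cycle through them and return the original edge set with the labels swapped. The paper's presentation is slightly crisper on the one point you leave to ``a geometric analysis'': it removes \emph{both} edges at $v$, observes that the remaining region is a pseudo-triangle $T$, and identifies the three bitangents with the geodesics from $v$ to the three corners of $T$, which immediately gives both existence and the bound of exactly three.
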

\begin{proof}
Consider what happens when we remove $v$. Deleting one of its edges leaves a pseudo-quadrilateral. Removing the second edge then either merges two corners into one, or removes one corner, leaving a pseudo-triangle $T$. There are three bitangents that connect $v$ to $T$, each corresponding to the geodesic between $v$ and a corner of $T$. Any choice of two of these bitangents results in a pointed pseu\-do-tri\-an\-gu\-la\-tion. When one of them is flipped, the only new edge that can be inserted so that the result is still a pointed pseu\-do-tri\-an\-gu\-la\-tion is the bitangent that was not there before the flip. Thus, we can interchange the labels with three flips (see Figure~\ref{fig:el-degree-2-swap}).
\end{proof}

\begin{figure}[htb]
 \centering
 \includegraphics{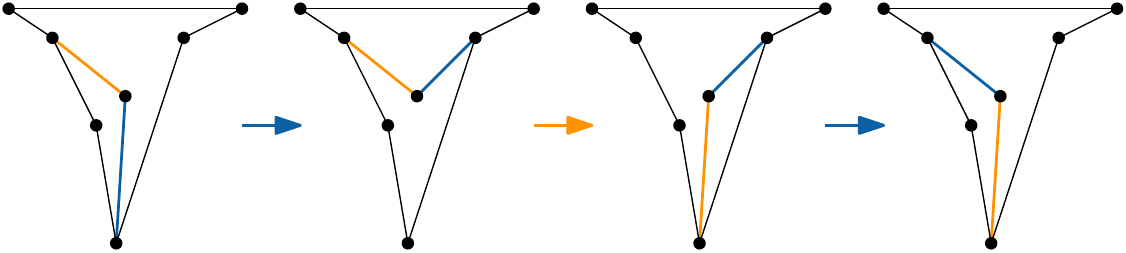}
 \caption{Interchanging the labels of the edges incident to a vertex of degree two.}
 \label{fig:el-degree-2-swap}
\end{figure}

\begin{lemma}[Sweep]
\label{lem:el-sweep}
In the left-shelling pseu\-do-tri\-an\-gu\-la\-tion, we can interchange the labels of any number of internal top edges and their corresponding bottom edges with $O(n)$ exchanging flips.
\end{lemma}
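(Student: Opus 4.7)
The plan is to process the vertices $v_n, v_{n-1}, \ldots, v_2$ in decreasing order of index, performing a bounded number of flips per vertex at which a swap is required. The target total is $O(n)$ flips by amortization: the left-shelling pseu\-do-tri\-an\-gu\-la\-tion has $O(n)$ edges, and each edge will be touched only a constant number of times across the whole procedure.

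The base case is immediate. Since there is no later vertex than $v_n$, no vertex has its top edge terminating at $v_n$, so $v_n$ has degree exactly two, incident only to its bottom edge $v_0 v_n$ and its top edge. If $v_n$ is among the vertices whose labels we want to swap, I apply Lemma~\ref{lem:el-degree-2-swap} and spend three flips; otherwise I spend none.

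For the inductive step at $v_i$, the difficulty is that $v_i$ may have degree larger than two: every later vertex $v_j$ whose top edge ends at $v_i$ contributes an additional edge, and these ``child'' edges all lie on the side of the top edge $t_i$ away from $v_0$. My plan is to temporarily reduce $v_i$ to degree two by flipping these child edges away. Flipping the top edge $t_j = v_j v_i$ replaces it with the unique other bitangent of its pseudo-quadrilateral; by the structure of the left-shelling, this bitangent no longer touches $v_i$. After all children of $v_i$ are flipped away, $v_i$ has degree two, and I apply Lemma~\ref{lem:el-degree-2-swap} to swap the labels of $b_i$ and $t_i$ with three flips. Finally, I reverse the child flips in the opposite order, restoring the original child edges (with their original labels, since labels follow the flipped edge back).

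The count is $O(n)$: each child edge contributes two flips (one to move it away, one to restore it), plus at most three flips for the local swap at each of the at most $n$ vertices, and $\sum_i (\text{number of children of }v_i)$ equals the number of internal top edges, which is $O(n)$. The main obstacle I anticipate is verifying the geometric claim that flipping a child top edge $v_j v_i$ produces an edge not incident to $v_i$, and that the reverse flip exactly restores $v_j v_i$; this requires identifying the pseudo-quadrilateral containing each child edge in the intermediate configurations and checking that the two bitangents are as claimed. Once this is established, the rest is bookkeeping.
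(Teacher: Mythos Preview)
Your approach differs from the paper's. Rather than locally reducing each $v_i$ to degree two and then restoring, the paper performs a single global sweep: a ray $L$ from $v_0$ rotates across the point set, maintaining the invariant that the vertices already passed form a \emph{right}-shelling pseudo-triangulation while those not yet passed still form a left-shelling one. An edge count shows that when $L$ reaches $v_k$, only one convex-hull edge crosses $L$, so $v_k$ has degree exactly two and Lemma~\ref{lem:el-degree-2-swap} applies directly; one further flip of $v_k$'s top edge then attaches it to the right-shelling side. A reverse sweep (with no swaps) restores the left-shelling structure at the end. The degree-two condition is thus a consequence of the global invariant, and no analysis of where individual flipped edges land is required.

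The obstacle you flag is real and not dispatched by a one-line argument. One might hope to invoke the fact that the two bitangents of a pseudo-quadrilateral cross in their interiors and hence share no endpoint, but this fails for the weakly-simple pseudo-quadrilaterals that arise here: when a child $v_j$ has degree two, removing $v_j v_i$ leaves $v_j$ as a spike, and the replacement bitangent is again incident to $v_j$ (precisely the phenomenon exploited in Lemma~\ref{lem:el-degree-2-swap}). While $v_i$ itself has degree at least three and so does not become a spike, you would still need to argue from the specific left-shelling geometry that the new bitangent avoids $v_i$, and to do so not only in the original configuration but in each intermediate one after several child edges have already been flipped away. That may well be provable, but it is exactly the case analysis that the paper's sweep invariant sidesteps.
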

\begin{proof}
Let $S$ be the set of vertices whose internal top edge should have its label swapped with the corresponding bottom edge. Consider a ray $L$ from $v_0$ that starts at the positive $x$-axis and sweeps through the point set to the negative $x$-axis. We will maintain the following invariant: the graph induced by the vertices to the left of $L$ is their left-shelling pseu\-do-tri\-an\-gu\-la\-tion and the graph induced by the vertices to the right of $L$ is their right-shelling pseu\-do-tri\-an\-gu\-la\-tion (both groups include $v_0$). Furthermore, the labels of the top edges of the vertices in $S$ to the right of $L$ have been interchanged with their respective bottom edges. This invariant is satisfied at the start.

Suppose that $L$ is about to pass a vertex $v_k$. If $v_k$ is on the convex hull, its top edge is not internal and no action is required for the invariant to hold after passing $v_k$. So assume that $v_k$ is not on the convex hull and consider its incident edges. It is currently part of the left-shelling pseu\-do-tri\-an\-gu\-la\-tion of points to the left of $L$, where it is the last vertex. Thus, $v_k$ is connected to $v_0$ and to one vertex to its left. It is not connected to any vertex to its right, since there are $2n - 3$ edges in total, and the left- and right-shelling pseu\-do-tri\-an\-gu\-la\-tions to each side of $L$ contribute $2(k + 1) - 3 + 2(n - k) - 3 = 2n - 4$ edges. So the only edge that crosses $L$ is an edge of the convex hull. Therefore $v_k$ has degree two, which means that we can use Lemma~\ref{lem:el-degree-2-swap} to swap the labels of its top and bottom edge with three flips if $v_k \in S$.

Furthermore, the sides of the pseudo-triangle that remains if we were to remove $v_k$, form part of the convex hull of the points to either side of $L$. Thus, flipping the top edge of $v_k$ results in the tangent from $v_k$ to the convex hull of the points to the right of $L$ -- exactly the edge needed to add $v_k$ to their right-shelling pseu\-do-tri\-an\-gu\-la\-tion. Therefore we only need $O(1)$ flips to maintain the invariant when passing $v_k$.

At the end, we have constructed the right-shelling pseu\-do-tri\-an\-gu\-la\-tion and swapped the desired edges. An analogous transformation without any swapping can transform the graph back into the left-shelling pseu\-do-tri\-an\-gu\-la\-tion with $O(n)$ flips in total.
\end{proof}

\begin{figure}[htb]
 \centering
 \includegraphics{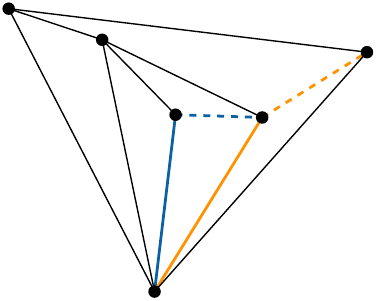}
 \caption{A pseudo-pentagon with four bitangents. It is impossible to swap the two diagonals without flipping an edge of the pseudo-pentagon, as they just flip back and forth between the solid bitangents and the dotted ones, regardless of the position of the other diagonal.}
 \label{fig:el-four-bitangents}
\end{figure}

\begin{figure}[htb]
 \centering
 \includegraphics{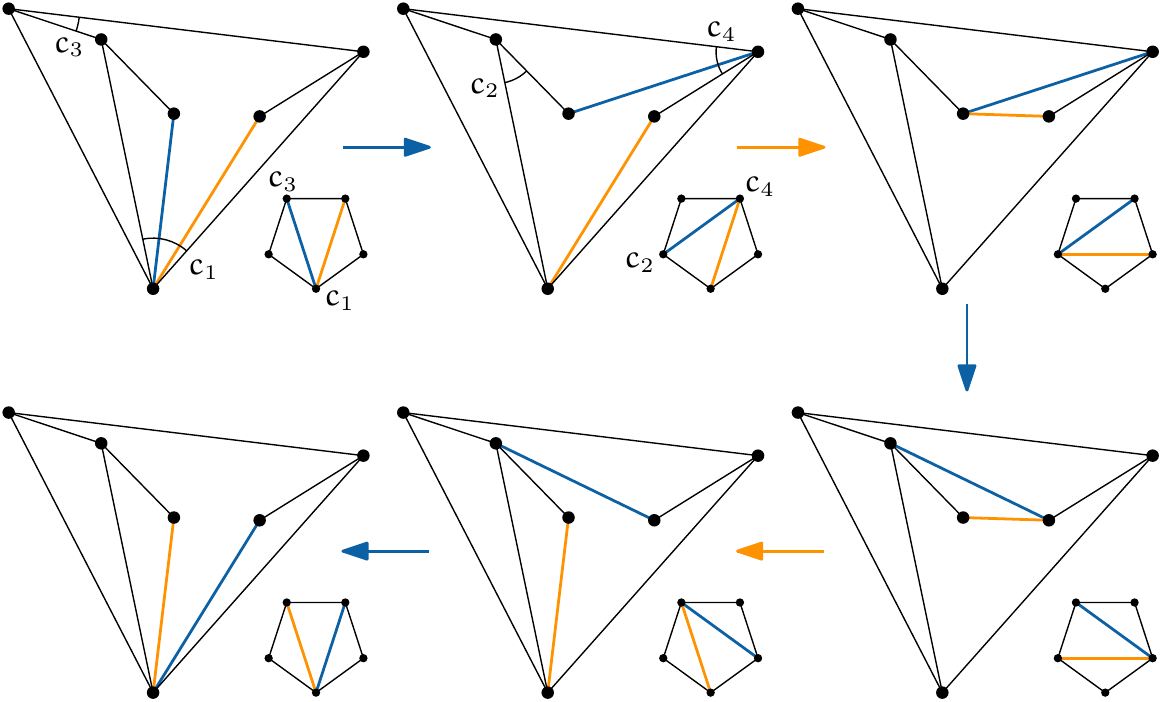}
 \caption{Interchanging the labels of two bitangents of a pseudo-pen\-tagon with five bitangents. An edge in the pentagon corresponds to a geodesic between two corners of the pseudo-pentagon.}
 \label{fig:el-pentagon-swap}
\end{figure}

\begin{lemma}
\label{lem:el-consecutive-swap}
In the left-shelling pseu\-do-tri\-an\-gu\-la\-tion, we can interchange the labels of two consecutive bottom edges with $O(1)$ exchanging flips.
\end{lemma}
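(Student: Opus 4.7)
The plan is to apply the pentagon-swap of Figure~\ref{fig:el-pentagon-swap} to a pseudo-pentagon formed by three adjacent pseudo-triangles of the left-shelling pseudo-triangulation. Let the two consecutive bottom edges to be swapped be $e_i = (v_0, v_i)$ and $e_{i+1} = (v_0, v_{i+1})$, and let $T_1, T_2, T_3$ be the three pseudo-triangles sharing these edges, where $T_2$ lies between $e_i$ and $e_{i+1}$ while $T_1$ and $T_3$ are the pseudo-triangles on the far sides of $e_i$ and $e_{i+1}$, respectively. I would set $R = T_1 \cup T_2 \cup T_3$ and show that $R$ is a pseudo-pentagon whose two interior diagonals are exactly $e_i$ and $e_{i+1}$, and that $R$ admits the five bitangents needed for the pentagon-swap.

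Establishing that $R$ is a pseudo-pentagon reduces to checking convexity at its five candidate corners $v_0, v_{i-1}, v_i, v_{i+1}, v_{i+2}$. Convexity at $v_0$ (which subtends three consecutive angular sectors at the apex), $v_{i-1}$, and $v_{i+2}$ is inherited directly from the three pseudo-triangles. For $v_i$, observe that $v_i$ is a convex corner of both $T_1$ and $T_2$, so by pointedness of the left-shelling pseudo-triangulation the reflex angle at $v_i$ must lie in a face outside $R$; this forces the combined angle at $v_i$ inside $R$ to be less than $\pi$. The symmetric argument handles $v_{i+1}$. The remainder of the boundary of $R$ consists of reflex chains inherited from the three pseudo-triangles, so $R$ is indeed a pseudo-pentagon, and $e_i$ and $e_{i+1}$ are its two diagonals, both bitangents since they are flippable edges of a pointed pseudo-triangulation.

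With $R$ identified, I would apply the flip sequence illustrated in Figure~\ref{fig:el-pentagon-swap}, which interchanges the labels of the two diagonals of a pseudo-pentagon with five bitangents in $O(1)$ exchanging flips. All intermediate edges stay inside $R$, so the rest of the left-shelling pseudo-triangulation is untouched, and the net effect is exactly the desired label swap on $e_i$ and $e_{i+1}$.

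The main obstacle is to verify that $R$ has all five bitangents rather than landing in the pathological case of Figure~\ref{fig:el-four-bitangents}, where one of the candidate bitangents is absent and no swap is possible without first flipping a side of the pentagon. Concretely, beyond $e_i$ and $e_{i+1}$, I would need to show that the three geodesics between the remaining pairs of non-adjacent corners of $R$ each consist of a single straight segment rather than bending around an interior reflex vertex of one of the chains. This comes down to a geometric argument about the reflex chains of $T_1, T_2, T_3$: because each chain lies on the side of $v_0v_{i-1}\cdots v_{i+2}$ opposite to $v_0$ and ends at the corners of $R$, a pointedness-style angle-budget calculation at the would-be bending vertex should rule out any bend, yielding the fifth bitangent.
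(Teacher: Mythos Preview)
Your plan has the right high-level shape, but it contains a genuine gap at exactly the point you flag as ``the main obstacle.'' You propose to show that the pseudo-pentagon $R$ always has five bitangents via an angle-budget argument; in fact this is \emph{false}. The paper's Figure~\ref{fig:el-four-bitangents} is not a hypothetical pathology but an actual configuration that arises in the left-shelling pseudo-triangulation: when the pseudo-triangle to the right of $b$ is not a true triangle, the pseudo-pentagon obtained by removing $a$ and $b$ can have only four bitangents, and then no sequence of flips confined to $R$ can swap the two diagonals. The paper's fix is to first flip the top edge of the top endpoint of $b$ (which has degree two in this case), producing a \emph{different} pseudo-pentagon that is then shown to have five bitangents (Lemma~\ref{lem:el-five-bitangents-b}); after the swap, this flip is undone. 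Your proposal omits this preparatory flip and asserts a property of $R$ that does not hold in general, so the argument cannot be completed as written.

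There is also a structural error earlier in the proposal: the corners of $R$ are not $v_0, v_{i-1}, v_i, v_{i+1}, v_{i+2}$ in general. In the left-shelling pseudo-triangulation, the pseudo-triangle associated to $v_k$ has corners $v_0$, $v_k$, and the tangent point of $v_k$'s top edge on the hull of $\{v_0,\ldots,v_{k-1}\}$, which need not be $v_{k-1}$. Consequently $v_i$ need not be a corner of $T_2$ (it may sit on a reflex chain there), so your convexity argument at $v_i$ and $v_{i+1}$ breaks down. Identifying the corners correctly is exactly what drives the case split in the paper (triangle vs.\ non-triangle to the right of $b$) and the subsequent Lemmas~\ref{lem:el-five-bitangents-a} and~\ref{lem:el-five-bitangents-b}.
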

\begin{proof}
When we remove the two consecutive bottom edges (say $a$ and $b$), we are left with a pseudo-pentagon $X$. A pseudo-pentagon can have up to five bitangents, as each bitangent corresponds to a geodesic between two corners. If $X$ has exactly five bitangents, this correspondence is a bijection. This implies that the bitangents of $X$ can be swapped just like diagonals of a convex pentagon (see Figure~\ref{fig:el-pentagon-swap}). On the other hand, if $X$ has only four bitangents, it is impossible to swap $a$ and $b$ without flipping an edge of $X$ (see~Figure~\ref{fig:el-four-bitangents}).

Fortunately, we can always transform $X$ into a pseudo-pentagon with five bitangents. If the pseudo-triangle to the right of $b$ is a triangle, $X$ already has five bitangents (see~Lemma~\ref{lem:el-five-bitangents-a} in Section~\ref{sec:el-deferred}). Otherwise, the top endpoint of $b$ is an internal vertex of degree two and we can flip its top edge to obtain a new pseudo-pentagon that does have five bitangents (see~Lemma~\ref{lem:el-five-bitangents-b} in Section~\ref{sec:el-deferred}). After swapping the labels of $a$ and $b$, we can flip this top edge back. Thus, in either case we can interchange the labels of $a$ and $b$ with $O(1)$ flips.
\end{proof}

We can use Lemma~\ref{lem:el-consecutive-swap} to reorder the labels of the bottom edges with insertion or bubble sort, as these algorithms only swap adjacent values.

\begin{corollary}[Shuffle]
\label{cor:el-shuffle}
In the left-shelling pseu\-do-tri\-an\-gu\-la\-tion, we can reorder the labels of all bottom edges with $O(n^2)$ exchanging flips.
\end{corollary}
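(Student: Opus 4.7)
The plan is to reduce the shuffle to sorting by adjacent transpositions, and then invoke Lemma~\ref{lem:el-consecutive-swap} for each transposition. The bottom edges of the left-shelling pseu\-do-tri\-an\-gu\-la\-tion, indexed by the vertices $v_1, \ldots, v_{n-1}$ in their clockwise order around $v_0$, form a linear sequence, and any desired relabeling of them is described by a permutation $\pi$ on this sequence. The plan is to realize $\pi$ by a sequence of swaps of adjacent bottom-edge labels.

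To construct the sequence of adjacent transpositions, I would appeal to any standard sorting-by-adjacent-comparisons algorithm (bubble sort or insertion sort both work): it is classical that every permutation on $m$ elements can be written as a product of at most $\binom{m}{2}$ simple transpositions. Applied to the $O(n)$ bottom edges, this yields a schedule of $O(n^2)$ adjacent swaps. Each such swap is implemented by Lemma~\ref{lem:el-consecutive-swap} using $O(1)$ exchanging flips, so in total we spend $O(n^2)$ exchanging flips, matching the bound claimed in the corollary.

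The only point that needs care is that successive calls to Lemma~\ref{lem:el-consecutive-swap} must each begin from the left-shelling pseu\-do-tri\-an\-gu\-la\-tion, so that the lemma is applicable. This is straightforward to verify: when the relevant pseudo-pentagon has only four bitangents, Lemma~\ref{lem:el-consecutive-swap} temporarily flips the top edge of a degree-two vertex to obtain a pseudo-pentagon with five bitangents, and then flips this auxiliary edge back after the label swap. Hence the underlying combinatorial structure is restored after each adjacent swap, and only the two targeted labels have moved. No substantive obstacle remains; the corollary follows immediately by concatenating the $O(n^2)$ invocations.
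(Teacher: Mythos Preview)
Your proposal is correct and matches the paper's own argument exactly: the paper simply observes that insertion or bubble sort reduces any permutation to $O(n^2)$ adjacent swaps, each handled by Lemma~\ref{lem:el-consecutive-swap} in $O(1)$ flips. Your extra remark that the left-shelling structure is restored after each invocation is a useful sanity check the paper leaves implicit.
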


Combining this with Lemmas~\ref{lem:el-sort-left-shelling} and \ref{lem:el-sweep}, and the fact that we can transform any pointed pseu\-do-tri\-an\-gu\-la\-tion into the left-shelling one with $O(n \log n)$ flips~\cite{bereg2004transforming}, gives the main result.

\begin{theorem}
\label{thm:el-upper-bound}
We can transform any edge-la\-belled pointed pseu\-do-tri\-an\-gu\-la\-tion with $n$ vertices into any other with $O(n^2)$ exchanging flips.
\end{theorem}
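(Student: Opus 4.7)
The plan is to combine all the tools developed in this section via a standard canonical-form argument. Given any edge-labelled pointed pseudo-triangulation $T$, I would first ignore the labels and apply Bereg's result to transform the underlying pseudo-triangulation into the left-shelling one using $O(n \log n)$ exchanging flips. Since exchanging flips simply carry the label of the removed edge to the new edge, this produces an edge-labelled left-shelling pseudo-triangulation, but with the labels in some arbitrary permutation. Then Lemma~\ref{lem:el-sort-left-shelling} says that $O(1)$ shuffles and sweeps rearrange these labels into the canonical pseudo-triangulation.

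To bound the flips, I would just add up the contributions. Each shuffle takes $O(n^2)$ exchanging flips by Corollary~\ref{cor:el-shuffle}, and each sweep takes $O(n)$ exchanging flips by Lemma~\ref{lem:el-sweep}, so the canonicalization phase costs $O(n^2)$ flips. Together with the $O(n \log n)$ flips from Bereg's transformation, the total cost of taking $T$ to the canonical edge-labelled left-shelling pseudo-triangulation is $O(n^2)$ exchanging flips. Applying this procedure to both input pseudo-triangulations $T_1$ and $T_2$ yields two flip sequences $\sigma_1$ and $\sigma_2$, each of length $O(n^2)$, that bring $T_1$ and $T_2$ respectively to the same canonical form.

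To finish, I would use the reversibility of exchanging flips: if a flip turns edge $e$ with label $\ell$ into edge $e'$ with label $\ell$, then flipping $e'$ in the resulting pseudo-triangulation returns $e$ with label $\ell$. Hence the reverse of $\sigma_2$ transforms the canonical pseudo-triangulation into $T_2$, and the concatenation of $\sigma_1$ with the reverse of $\sigma_2$ transforms $T_1$ into $T_2$ using $O(n^2)$ exchanging flips in total.

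There is no genuine obstacle left at this stage, since the delicate work has already been done in Lemmas~\ref{lem:el-degree-2-swap}--\ref{lem:el-consecutive-swap}; the only thing to verify is that the shuffle dominates the flip count, and that the two input pseudo-triangulations share the same label set so that the canonical pseudo-triangulation reached from each side is literally identical (otherwise the reversal trick would not line up the labels).
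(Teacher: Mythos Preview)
Your proposal is correct and follows essentially the same approach as the paper: transform to the left-shelling pseudo-triangulation via Bereg's $O(n\log n)$ bound, then invoke Lemma~\ref{lem:el-sort-left-shelling} together with the $O(n^2)$ shuffle (Corollary~\ref{cor:el-shuffle}) and $O(n)$ sweep (Lemma~\ref{lem:el-sweep}) bounds, and conclude by reversibility. The paper states this more tersely, but your explicit canonical-form-plus-reversal argument is exactly the intended reasoning.
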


The following lower bound follows from the $\Omega(n \log n)$ lower bound on the flip distance between edge-la\-belled triangulations of a convex polygon~\cite{bose2013flipping}.

\begin{theorem}
\label{thm:el-lower-bound}
There are pairs of edge-la\-belled pointed pseu\-do-tri\-an\-gu\-la\-tions with $n$ vertices that require $\Omega(n \log n)$ exchanging flips to transform one into the other.
\end{theorem}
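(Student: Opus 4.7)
The plan is to reduce from the $\Omega(n \log n)$ lower bound of Bose~\etal~\cite{bose2013flipping} on the flip distance between edge-labelled triangulations of a convex polygon. I would realize that lower bound inside our setting by choosing the $n$ input points in convex position and observing that on such a point set edge-labelled pointed pseudo-triangulations coincide with edge-labelled triangulations, and exchanging flips coincide with ordinary diagonal flips.

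First I would argue that if the point set is in convex position, then every pseudo-triangle in any pseudo-triangulation is an ordinary triangle: the vertices of such a pseudo-triangle are a subset of the point set and hence in convex position, so the only simple polygon on them has exclusively convex interior angles, and in order to qualify as a pseudo-triangle it must have exactly three corners and thus exactly three vertices. Consequently every pseudo-triangulation is a triangulation, and pointedness is automatic because at each convex hull vertex the outer face contributes a reflex angle. Next I would check that an exchanging flip acts exactly as a classical flip in this setting: removing an edge leaves a pseudo-quadrilateral which is in fact a convex quadrilateral, and its only two bitangents are its two diagonals, so the flip replaces one diagonal by the other and carries the label along.

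With this correspondence in place, any pair of edge-labelled triangulations of a convex $n$-gon witnessing the lower bound of \cite{bose2013flipping} is automatically a pair of edge-labelled pointed pseudo-triangulations on $n$ vertices, and any sequence of exchanging flips transforming one into the other projects to a sequence of triangulation flips of the same length. Hence the $\Omega(n \log n)$ bound transfers verbatim. I do not foresee any real obstacle; the argument is a direct embedding, and the only point that needs verification is the equivalence of the two flip operations on a convex point set, which is immediate from the definitions.
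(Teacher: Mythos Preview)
Your proposal is correct and follows the same approach as the paper: place the points in convex position, identify pointed pseudo-triangulations with ordinary triangulations and exchanging flips with diagonal flips, and invoke the $\Omega(n\log n)$ lower bound of~\cite{bose2013flipping}. If anything, your version is slightly more careful, since you explicitly argue that \emph{every} pointed pseudo-triangulation of a convex point set is a triangulation (so an exchanging-flip sequence cannot detour through non-triangulations), a point the paper leaves implicit.
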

\begin{proof}
 When all vertices are in convex position, each vertex is pointed in the outer face. As such, any triangulation of these vertices is, in fact, a pointed pseudo-triangulation. Additionally, there is no difference between the standard definition of a flip in a triangulation of a convex polygon and the flip in that same setting, when interpreted as a pointed pseudo-triangulation. Thus, the lower bound of $\Omega(n \log n)$ flips to transform between two edge-labelled triangulations of a convex polygon applies to pointed pseudo-triangulations as well.
\end{proof}

\subsection{Deferred proofs}
\label{sec:el-deferred}

This section contains a few technical lemmas that were omitted from the previous section.

\begin{figure}[htb]
 \centering
 \begin{subfigure}[b]{0.48\textwidth}
  \centering
  \includegraphics{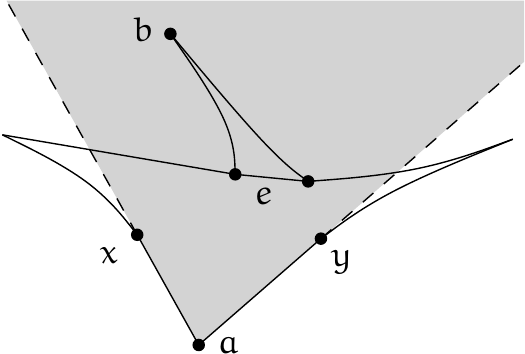}
  \caption{}
  \label{fig:el-opposite-flip}
 \end{subfigure}
 \begin{subfigure}[b]{0.48\textwidth}
  \centering
  \includegraphics{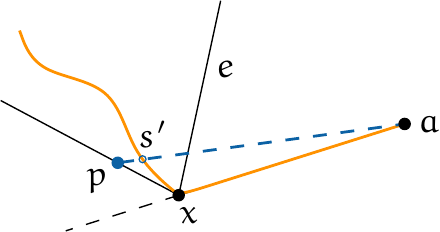}
  \caption{}
  \label{fig:el-shorter-path}
 \end{subfigure}
 \caption{(a) A corner of a pseudo-triangle and an edge such that the entire pseudo-triangle on the other side of the edge lies inside the corner's wedge. (b) If $a$ can see a point past $x$, then the geodesic does not contain $x$.}
\end{figure}

\begin{lemma}
\label{lem:el-opposite-flip}
Let $a$ be a corner of a pseudo-triangle with neighbours $x$ and $y$, and let $e$ be an edge on the chain opposite $a$. If all vertices of the other pseudo-triangle containing $e$ lie in the wedge formed by extending the edges $ax$ and $ay$ into half-lines (see Figure~\ref{fig:el-opposite-flip}), then flipping $e$ will result in an edge incident on $a$.
\end{lemma}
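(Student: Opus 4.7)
My plan is to identify the edge resulting from the flip with the unique second bitangent $e'$ of the pseudo-quadrilateral $Q = T \cup T'$ obtained by removing $e$, and then to argue that $e'$ has $a$ as an endpoint.

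First I would observe that $a$ is still a corner of $Q$: because $e$ lies on the chain of $T$ opposite $a$, it is not incident to $a$, so the edges at $a$ in $Q$ remain $ax$ and $ay$ and the interior angle at $a$ is the same convex angle as in $T$. Hence $Q$ is a pseudo-quadrilateral having $a$ among its four corners, admitting exactly two bitangents, one of which is $e$. Using the bitangent-geodesic correspondence (the pseudo-quadrilateral specialization of the pentagon version invoked in Lemma~\ref{lem:el-consecutive-swap}), the two bitangents of $Q$ correspond to its two pairs of opposite corners; since $e$'s endpoints lie on the chain of $Q$ on the far side of $a$, $e$ corresponds to the pair of opposite corners not containing $a$, and so $e'$ corresponds to the pair that does contain $a$.

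The main geometric step is to show that the geodesic from $a$ to its opposite corner in $Q$ is a single straight segment, so that $e'$, being this geodesic, is incident to $a$. By the wedge hypothesis, every vertex of $T'$ lies in the convex angular wedge $W$ at $a$ bounded by the extensions of $ax$ and $ay$, so any straight segment from $a$ to a vertex of $T'$ lies in $W$. I would then argue that this segment lies inside $Q$ itself: a crossing with one of $T$'s reflex chains (from $x$ toward the other corner $b$ of $T$ or from $y$ toward the other corner $c$) would force the segment's far endpoint, a vertex of $T'$, to sit on the wrong side of the chain and hence outside $W$, contradicting the hypothesis. The segment is therefore a bitangent of $Q$, and since $a$ is not an endpoint of $e$, it must equal $e'$.

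The hardest part will be making this "no crossing" argument rigorous: the wedge condition constrains $T'$ directly but not the reflex chains of $T$, so one must combine the convexity of $W$ with the angular structure of these chains to rule out intersections of the candidate tangent segment from $a$ with $T$'s boundary. I expect this combination to be the main geometric content of the lemma.
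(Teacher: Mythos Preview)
Your setup is sound and matches the paper: $a$ remains a corner of the pseudo-quadrilateral $Q$, and the new bitangent $e'$ is the diagonal of the geodesic in $Q$ between $a$ and the opposite corner $b$ (the corner of the other pseudo-triangle $T'$ opposite $e$).

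The gap is in your ``main geometric step''. You aim to show that this geodesic is a \emph{single straight segment} $ab$, and to that end you only rule out crossings of $ab$ with the two side chains of the original pseudo-triangle $T$. But the boundary of $Q$ also contains the two reflex chains of $T'$ emanating from $b$, and nothing in the wedge hypothesis prevents those chains from obstructing the straight segment $ab$. All vertices of $T'$ lying in the wedge guarantees that $ab$ stays in the wedge, not that $ab$ stays inside $Q$: a reflex chain of $T'$ can bulge inward and force the geodesic to bend at one or more reflex vertices on the $T'$ side. In such cases the geodesic from $a$ to $b$ is \emph{not} a single segment, so the step as you stated it is false in general.

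The paper sidesteps this by proving only what is needed: it looks at the point $s$ where the geodesic crosses $e$ and argues that the portion from $s$ to $a$ is the straight segment $sa$ (because $s$ lies in the open wedge, and the boundary case $s\in\{x,y\}$ is handled by a short shortcutting argument). Hence the unique interior edge of the geodesic --- the new bitangent $e'$ --- has $a$ as one endpoint, regardless of whether the geodesic bends on the $T'$ side. Your argument can be repaired by making exactly this weaker claim; the ``no crossing'' analysis you sketch then only needs to cover the $T$ side, which is indeed the easy part, and you avoid the (possibly false) assertion about the full segment $ab$.
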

\begin{proof}
Let $T$ be the pseudo-triangle on the other side of $e$, and let $b$ be the corner of $T$ opposite $e$. Then flipping $e$ inserts the geodesic between $a$ and $b$. This geodesic must intersect $e$ in a point $s$ and then follow the shortest path from $s$ to $a$. If $s$ lies strictly inside the wedge, nothing can block $as$, thus the new edge will contain $as$ and be incident on $a$.

Now, if all of $e$ lies strictly inside the wedge, our result follows. But suppose that $e$ has $x$ as an endpoint and the geodesic between $a$ and $b$ intersects $e$ in $x$. As $a$ can see $x$ and all of $T$ lies inside the wedge, there is an $\varepsilon > 0$ such that $a$ can see the point $X$ on the boundary of $T$ at distance $\varepsilon$ from $x$ (see Figure~\ref{fig:el-shorter-path}). The line segment $ap$ intersects the geodesic at a point $s'$. By the triangle inequality, $s'a$ is shorter than following the geodesic from $s'$ via $x$ to $a$. But then this would give a shorter path between $a$ and $b$, by following the geodesic to $s'$ and then cutting directly to $a$. As the geodesic is the shortest path by definition, this is impossible. Thus, the geodesic cannot intersect $e$ at $x$ and the new edge must be incident to $a$.
\end{proof}

\begin{lemma}
\label{lem:el-five-bitangents-a}
Let $a$ and $b$ be two consecutive internal bottom edges in the left-shelling pseu\-do-tri\-an\-gu\-la\-tion, such that the pseudo-triangle to the right of $b$ is a triangle. Then the pseudo-pentagon $X$ formed by removing $a$ and $b$ has five bitangents.
\end{lemma}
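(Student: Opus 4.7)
The plan is to identify the five corners of $X$ and then verify that each of the five geodesics between non-adjacent pairs of corners is a bitangent. Writing $a = v_0 v_i$ and $b = v_0 v_{i+1}$, and letting $T_1$, $T_2$, $T_3$ denote the pseudo-triangles to the left of $a$, between $a$ and $b$, and to the right of $b$ respectively, the boundary of $X$ traces the edge $v_0 v_{i-1}$, the top reflex chain of $T_1$ from $v_{i-1}$ to $v_i$, the top reflex chain of $T_2$ from $v_i$ to $v_{i+1}$, the straight edge $v_{i+1} v_{i+2}$ (straight because $T_3$ is a triangle), and the edge $v_0 v_{i+2}$. So the five corners of $X$ are $v_0, v_{i-1}, v_i, v_{i+1}, v_{i+2}$.

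Two of the five potential bitangents are just $a$ and $b$ themselves, which are bitangents by construction. Two more, the geodesics $v_{i-1} v_{i+1}$ and $v_i v_{i+2}$, are exactly the ``other'' bitangents of the pseudo-quadrilaterals $T_1 \cup T_2$ and $T_2 \cup T_3$ obtained by undoing just $a$ or just $b$; since every internal edge of a pointed pseu\-do-tri\-an\-gu\-la\-tion is flippable, these bitangents exist, and since the shortest path in $X$ between each of these corner pairs does not enter $T_3$ or $T_1$ respectively, they remain valid geodesics inside all of $X$.

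The remaining diagonal, the geodesic from $v_{i-1}$ to $v_{i+2}$, requires a direct wedge-containment argument in the spirit of Lemma~\ref{lem:el-opposite-flip}: the clockwise angular ordering of $v_{i-1}, v_i, v_{i+1}, v_{i+2}$ around $v_0$, combined with the triangle property of $T_3$, places all of $T_1$ and $T_2$ inside the wedge at $v_{i+2}$ formed by extending $v_{i+2} v_{i+1}$ and $v_{i+2} v_0$. A symmetric check at $v_{i-1}$ then shows that the geodesic enters each endpoint strictly inside the corner wedge and therefore does not coincide with any boundary edge at either endpoint, producing the fifth distinct bitangent.

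The main obstacle is precisely this last geodesic, since it is the only one that does not arise from a single flip in the original pointed pseu\-do-tri\-an\-gu\-la\-tion. The triangle hypothesis on $T_3$ is exactly what gives the clean wedge at $v_{i+2}$ needed for this wedge-containment argument, and it is precisely this ingredient that must be manufactured by an additional top-edge flip in the companion Lemma~\ref{lem:el-five-bitangents-b}.
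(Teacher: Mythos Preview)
Your identification of the corners of $X$ is where the argument breaks down. You write $a=v_0v_i$, $b=v_0v_{i+1}$ and then assert that the boundary of $X$ is the edge $v_0v_{i-1}$, a reflex chain from $v_{i-1}$ to $v_i$, a reflex chain from $v_i$ to $v_{i+1}$, the edge $v_{i+1}v_{i+2}$, and the edge $v_0v_{i+2}$, with corners $v_0,v_{i-1},v_i,v_{i+1},v_{i+2}$. But in the left-shelling pseudo-triangulation the pseudo-triangle $T_1$ (the face created when $v_i$ is added) has $a$ as one side, the single top edge $v_iq$ as a second side (where $q$ is the tangent point from $v_i$ to $\mathrm{CH}(v_0,\dots,v_{i-1})$), and a reflex chain from $q$ down to $v_0$ along that hull as the third side. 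Likewise $T_2$ has corner $p$, the tangent point of $v_{i+1}$'s top edge, which need not be $v_i$. The hypothesis only forces $T_3$ to be a triangle; $T_1$ and $T_2$ can be genuine pseudo-triangles. So in general neither $v_{i-1}$ nor $v_i$ is a corner of $X$ (when $p\neq v_i$, the vertex $v_i$ is a degree-one spike on the boundary of the weakly simple pentagon), and the ``symmetric check at $v_{i-1}$'' has no vertex to be performed at.

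The paper sidesteps this by never naming the left-side corners explicitly: it labels the corners $c_0,\dots,c_4$, applies Lemma~\ref{lem:el-opposite-flip} to the flip of $b$ (using the triangle hypothesis to get the wedge at $c_1=v_{i+2}$), observes that the resulting edge $b'$ is tangent to the chain $v_0,v_a,\dots,c_3$ so that the \emph{new} pseudo-triangle again satisfies the hypotheses of Lemma~\ref{lem:el-opposite-flip} at $c_1$, and flips $a$ to get a fourth bitangent $a'$ also incident on $c_1$. The fifth bitangent is then obtained by flipping $a$ \emph{before} $b$; distinctness follows because this edge is incident on neither $v_0$ nor $c_1$, whereas the other four are. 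Your outline establishes neither the correct corners nor the distinctness of the five diagonals, and the wedge argument you propose for the $(c_1,c_4)$ geodesic is not symmetric: there is no triangle hypothesis on the $c_4$ side to supply the needed wedge.
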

\begin{proof}
Let $c_0, \ldots, c_4$ be the corners of $X$ in counter-clockwise order around the boundary. By Lemma~\ref{lem:el-opposite-flip}, flipping $b$ results in an edge $b'$ that intersects $b$ and is incident on $c_1$. This edge is part of the geodesic between $c_1$ and $c_3$, and as such it is tangent to the convex chain $v_0, v_a, \ldots, c_3$, where $v_a$ is the top endpoint of $a$ ($v_a$ could be $c_3$). Therefore it is also the tangent from $c_1$ to the convex hull of $\{v_0, \ldots, v_a\}$. This means that the newly created pseudo-triangle with $c_1$ as corner and $a$ on the opposite pseudo-edge also meets the conditions of Lemma~\ref{lem:el-opposite-flip}. Thus, flipping $a$ results in another edge, $a'$, also incident on $c_1$. As $b$ separates $c_1$ from all vertices in $\{v_0, \ldots, v_a\}$, $a'$ must also intersect $b$. This gives us four bitangents, of which two are incident on $v_0$ ($a$ and $b$), and two on $c_1$ ($a'$ and $b'$). Finally, flipping $a$ before flipping $b$ results in a bitangent that is not incident on $v_0$ (as $v_0$ is a corner and cannot be on the new geodesic), nor on $c_1$ (as $b$ separates $a$ from $c_1$). Thus, $X$ has five bitangents.
\end{proof}

\begin{lemma}
\label{lem:el-five-bitangents-b}
Let $a$ and $b$ be two consecutive internal bottom edges in the left-shelling pseu\-do-tri\-an\-gu\-la\-tion, such that the pseudo-triangle to the right of $b$ is not a triangle. Then the pseudo-pentagon $X$ formed by flipping the corresponding top edge of $b$ and removing $a$ and $b$ has five bitangents.
\end{lemma}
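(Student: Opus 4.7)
The plan is to mirror the argument of Lemma~\ref{lem:el-five-bitangents-a}, with one auxiliary flip performed at the outset to reduce to a configuration where the wedge condition of Lemma~\ref{lem:el-opposite-flip} holds.

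First I would verify that when the pseudo-triangle to the right of $b$ is not a triangle, its top endpoint $v_b$ has degree two in the left-shelling: $v_b$ lies strictly inside a reflex chain of that pseudo-triangle, so it cannot serve as the tangent point of any later vertex's top edge, and its only incident edges are $b$ and its own top edge. This justifies flipping $v_b$'s top edge; call the result $t'$. By Lemma~\ref{lem:el-opposite-flip} applied across $v_b$'s old top edge, $t'$ is incident to the opposite corner, and in the modified pseu\-do-tri\-an\-gu\-la\-tion the face immediately to the right of $b$ becomes a proper triangle with $v_b$ as one of its corners.

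Next, removing $a$ and $b$ yields the pseudo-pentagon $X$. Let $c_0 = v_0, c_1, \ldots, c_4$ be its corners in counter-clockwise order, with $c_1$ the corner adjacent to $v_0$ on the right. From here I would replay the proof of Lemma~\ref{lem:el-five-bitangents-a} essentially verbatim: the triangle to the right of $b$ now satisfies the wedge condition at $c_1$, so by Lemma~\ref{lem:el-opposite-flip} flipping $b$ produces an edge $b'$ incident to $c_1$; this $b'$ is the tangent from $c_1$ to the convex hull of $\{v_0, \ldots, v_a\}$ (where $v_a$ is the top endpoint of $a$), so the new pseudo-triangle across $a$ also satisfies the wedge condition at $c_1$, and flipping $a$ yields $a'$ incident to $c_1$ and crossing $b$. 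Together with $a$ and $b$ themselves (both incident to $v_0$), this accounts for four bitangents, and flipping $a$ before $b$ produces a fifth bitangent that is neither incident to $v_0$ (a corner cannot lie on the new geodesic) nor to $c_1$ (since $b$ separates $a$ from $c_1$).

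The main obstacle is the first step: verifying that the auxiliary flip actually repairs the wedge condition which, as Figure~\ref{fig:el-four-bitangents} illustrates, can fail in the original configuration. Once the face to the right of $b$ is genuinely a triangle with $v_b$ as a corner, the convexity properties of the left-shelling automatically place the relevant vertices inside the required wedges, and everything after reduces to the argument already executed for Lemma~\ref{lem:el-five-bitangents-a}.
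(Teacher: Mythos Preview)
Your reduction to Lemma~\ref{lem:el-five-bitangents-a} breaks at the first substantive step. After the auxiliary flip, the face to the right of $b$ is indeed the triangle $v_0v_bv_{b+1}$, so your candidate for $c_1$ is $v_{b+1}$. But the wedge at $c_1$ in Lemma~\ref{lem:el-opposite-flip} is bounded by the rays $v_{b+1}v_0$ and $v_{b+1}v_b$, and the second of these is the \emph{newly created} edge, not the tangent from $v_{b+1}$ to $\mathrm{CH}(v_0,\ldots,v_b)$. Because the original face to the right of $b$ was not a triangle, $v_b$ lies strictly inside the old wedge, so the new wedge is strictly narrower. Meanwhile the pseudo-triangle on the other side of $b$ has become strictly \emph{larger}: after the flip it contains $v_{b+1}$'s original tangent point and the portion of $\mathrm{CH}(v_0,\ldots,v_b)$ between that point and $v_b$. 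These vertices lie outside the narrow wedge, so the hypothesis of Lemma~\ref{lem:el-opposite-flip} fails, and flipping $b$ does \emph{not} yield an edge incident to $c_1$. In fact it yields exactly the old top edge of $v_b$ --- which is precisely what the paper records when it says ``$b'$ --- the old top edge of $b$ and the result of flipping $b$.'' So the very first flip in your replay of Lemma~\ref{lem:el-five-bitangents-a} produces the wrong edge, and the rest of the argument (tangency of $b'$ to $\mathrm{CH}(v_0,\ldots,v_a)$, then $a'$ incident to $c_1$, then a fifth bitangent) cannot proceed.

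The paper does not attempt this reduction. Instead it takes the three ``free'' bitangents $a$, $b$, and $b'$ (the old top edge), observes that the reflex chain of $X$ lies entirely to the left of $a$ and can therefore be collapsed to a single segment without affecting the bitangent structure, and then does a short case analysis: $X$ is either a triangle with two interior points ($v_a$, $v_b$) or a convex quadrilateral with one interior point ($v_b$). In each case the remaining two bitangents are exhibited directly by checking non-crossing with the two interior edges. This avoids Lemma~\ref{lem:el-opposite-flip} entirely for the last two bitangents.
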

\begin{proof}
Let $v_a$ and $v_b$ be the top endpoints of $a$ and $b$. By Lemma~\ref{lem:el-opposite-flip} and since $b$ had degree two, flipping the top edge of $b$ results in the edge $v_bc_1$. We get three bitangents for free: $a$, $b$, and $b'$ -- the old top edge of $b$ and the result of flipping $b$.

$X$ consists of a reflex chain $C$ that is part of the convex hull of the points to the left of $a$, followed by three successive tangents to $C$, $v_a$, or $v_b$. Since $C$ lies completely to the left of $a$, it cannot significantly alter any of the geodesics or bitangents inside the polygon, so we can reduce it to a single edge. Now, $X$ consists either of a triangle with two internal vertices, or a convex quadrilateral with one internal vertex.

If $X$ is a triangle with two internal vertices, the internal vertices are $v_a$ and $v_b$. Let its exterior vertices be $v_0$, $x$, and $y$. Then there are seven possible bitangents: $a = v_0v_a$, $b = v_0v_b$, $xv_a$, $xv_b$, $yv_a$, $yv_b$, and $v_av_b$. We know that $xv_a$ and $yv_b$ are edges, so there are five possible bitangents left. As all vertices involved are either corners or have degree one in $X$, the only condition for an edge to be a bitangent is that it does not cross the boundary of $X$. Since the exterior boundary is a triangle, this reduces to it not crossing $xv_a$ and $yv_b$. Two line segments incident to the same vertex cannot cross. Thus, $xv_b$, $yv_a$, and $v_av_b$ cannot cross $xv_a$ and $yv_b$, and $X$ has five bitangents.

If $X$'s convex hull has four vertices, the internal vertex is $v_b$ (otherwise the pseudo-triangle to the right of $b$ would be a triangle). Let its exterior vertices be $v_0$, $x$, $v_a$, and $y$. Then there are six possible bitangents: $a = v_0v_a$, $b = v_0v_b$, $xy$, $xv_b$, $yv_b$, and $v_av_b$, of which one ($yv_b$) is an edge of $X$. Since $a$ and $b$ are guaranteed to be bitangents, and $xy$, $xv_b$, and $v_av_b$ all share an endpoint with $yv_b$, the arguments from the previous case apply and we again have five bitangents.
\end{proof}

\section{General pseu\-do-tri\-an\-gu\-la\-tions}
\label{sec:el-general-pts}

In this section, we extend our results for edge-la\-belled pointed pseu\-do-tri\-an\-gu\-la\-tions to all edge-la\-belled pseu\-do-tri\-an\-gu\-la\-tions. Since not all pseu\-do-tri\-an\-gu\-la\-tions have the same number of edges, we need to allow flips that change the number of edges. In particular, we allow a single edge to be deleted or inserted, provided that the result is still a pseu\-do-tri\-an\-gu\-la\-tion.

Since we are dealing with edge-la\-belled pseu\-do-tri\-an\-gu\-la\-tions, we need to determine what happens to the edge labels. It is useful to first review the properties we would like these flips to have. First, a flip should be a local operation -- it should affect only one edge. Second, a labelled edge should be flippable if and only if the edge is flippable in the unlabelled setting. This allows us to re-use the existing results on flips in pseu\-do-tri\-an\-gu\-la\-tions. Third, flips should be reversible. Like most proofs about flips, our proof in the previous section crucially relies on the reversibility of flips.

With these properties in mind, the edge-deletion flip is rather straightforward -- the labelled edge is removed, and other edges are not affected. Since the edge-insertion flip needs to be the inverse of this, it should insert the edge and assign it a \emph{free label} -- an unused label in $\{1, \ldots, 3n - 3 - 2h\}$, where $h$ is the number of vertices on the convex hull ($3n - 3 - 2h$ is the number of internal edges in a triangulation).

With the definitions out of the way, we can turn our attention to the number of flips required to transform any edge-la\-belled pseu\-do-tri\-an\-gu\-la\-tions into any other. In this section, we show that by using insertion and deletion flips, we can shuffle (permute the labels on bottom edges) with $O(n + h \log h)$ flips. Combined with the unlabelled bound of $O(n \log c)$ flips by Aichholzer~\etal~\cite{aichholzer2006transforming}, this brings the total number of flips down to $O(n \log c + h \log h)$. When the points are in convex position ($h = n$), all pseudo-triangulations are full triangulations, and the $O(n \log n)$ bound by Bose~\etal~\cite{bose2013flipping} in this setting implies our bound. Therefore, we assume for the remainder of this section that the points are not in convex position ($h < n$). As before, we first build a collection of simple tools that help prove the main result.

\begin{lemma}
 \label{lem:el-degree-2-free}
 With $O(1)$ flips, we can interchange the label of an edge incident to an internal vertex of degree two with a free label.
\end{lemma}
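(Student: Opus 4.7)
The plan is to interchange the label by the following three-flip sequence, which leaves the underlying pseudo-triangulation unchanged: an exchanging flip, followed by an insertion flip, followed by a deletion flip. Let $v$ be the degree-two internal vertex, let $e_1$ be the edge at $v$ carrying the label $\ell$ to be interchanged with the free label $f$, and let $e_2$ be the other edge at $v$. Exactly as in the proof of Lemma~\ref{lem:el-degree-2-swap}, deleting $v$ from the pseudo-triangulation leaves a pseudo-triangle $T$, and the three bitangents from $v$ to the corners of $T$ are $e_1$, $e_2$, and a third bitangent $e_3$.

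First, I would perform an exchanging flip that replaces $e_1$ by $e_3$, which is valid by Lemma~\ref{lem:el-degree-2-swap} and transports the label $\ell$ from $e_1$ to $e_3$. Second, I would apply an insertion flip that re-inserts $e_1$ with the free label $f$; the resulting graph has $v$ incident to all three bitangents $e_1$, $e_2$, $e_3$. Third, I would apply a deletion flip that removes $e_3$, freeing the label $\ell$ and restoring the original pseudo-triangulation, now with $e_1$ carrying $f$ and $\ell$ free. Since we are in the general (not necessarily pointed) setting, there is no obstacle to $v$ becoming temporarily non-pointed in the middle of this sequence.

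The main obstacle is verifying that the intermediate configuration with $v$ incident to all three bitangents is a valid (non-pointed) pseudo-triangulation, so that the insertion and deletion flips are legal. Because the reflex chains of $T$ bulge inward, $v$ lies strictly inside the triangle formed by the three corners of $T$; hence the three bitangents from $v$ partition the $360^\circ$ around $v$ into three angles, each strictly less than $180^\circ$, making $v$ non-pointed. Each of the three sub-regions cut from $T$ by the bitangents is bounded by two bitangents (meeting at $v$ with a convex angle) and one reflex chain of $T$, and has exactly three convex corners (the two endpoints of the chain together with $v$), so each sub-region is a pseudo-triangle. This single geometric observation justifies both the insertion in the second step and the deletion in the third step (as its reverse), yielding the desired interchange in $O(1)$ flips.
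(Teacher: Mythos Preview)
Your proof is correct and takes essentially the same approach as the paper: both exploit the third bitangent from $v$ into the surrounding pseudo-triangle $T$ and use one insertion and one deletion to swap the label with a free one, differing only in the order of the three flips (the paper does insertion--deletion--exchange, you do exchange--insertion--deletion). Your geometric justification for the intermediate non-pointed configuration is slightly loose---the bitangents need not land on the corners of $T$, so each sub-region is bounded by a portion of $\partial T$ containing one corner rather than by a single full reflex chain---but the conclusion holds and the paper itself simply asserts this step without further detail.
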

\begin{proof}
 Let $v$ be a vertex of degree two and let $e$ be an edge incident to $v$. Since $v$ has degree two, its removal leaves an empty pseudo-triangle $T$. There are three bitangents that connect $v$ to $T$, one for each corner. Thus, we can insert the third bitangent $f$ with the desired free label, making $v$ non-pointed (see~Figure~\ref{fig:el-degree-2-free}). Flipping $e$ now removes it and frees its label. Finally, flipping $f$ moves it into $e$'s starting position, completing the exchange.
\end{proof}

\begin{figure}[htb]
 \centering
 \includegraphics{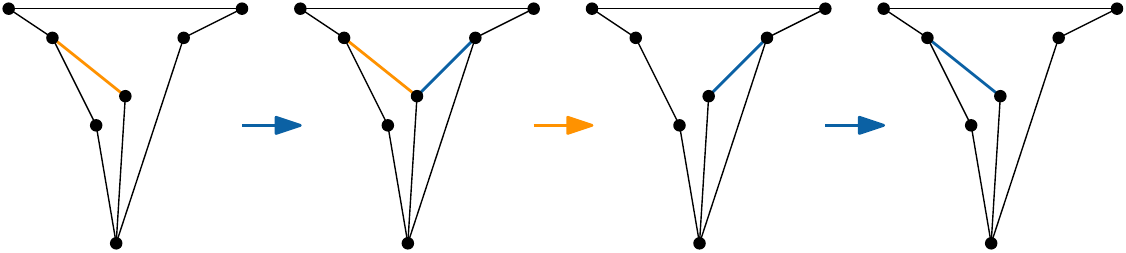}
 \caption{Interchanging the label of an edge incident to a vertex of degree two with a free label.}
 \label{fig:el-degree-2-free}
\end{figure}

This implies that, using an arbitrary free label as placeholder, we can swap any two edges incident to internal degree-two vertices -- no matter where they are in the pseu\-do-tri\-an\-gu\-la\-tion.

\begin{corollary}
 \label{lem:el-two-degree-2-swap}
 We can interchange the labels of two edges, each incident to some internal vertex of degree two, with $O(1)$ flips.
\end{corollary}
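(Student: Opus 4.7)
The plan is to invoke Lemma~\ref{lem:el-degree-2-free} three times, using a free label as an intermediate placeholder. Let $e_1$ and $e_2$ be the two edges, with current labels $L_1$ and $L_2$, incident to internal degree-two vertices $v_1$ and $v_2$ respectively. First I would observe that a free label $\ell$ is always available: since $v_1$ is an internal vertex of degree two, the pseudo-triangulation cannot be a full triangulation (in which every internal vertex has degree at least three), so the total number of edges is strictly less than $3n-3-2h$ and at least one label in $\{1,\ldots,3n-3-2h\}$ is unused.

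With such an $\ell$ in hand, I would perform the standard three-stage swap via a temporary placeholder. First, apply Lemma~\ref{lem:el-degree-2-free} to $(v_1,e_1)$ using $\ell$, so that the edge at $e_1$'s position carries $\ell$ and $L_1$ becomes free. Next, apply the lemma to $(v_2,e_2)$ using the now-free label $L_1$, so that $e_2$ carries $L_1$ and $L_2$ becomes free. Finally, apply the lemma once more to $(v_1,e_1)$ using $L_2$, so that $e_1$ carries $L_2$ and $\ell$ becomes free again. The net effect is that the labels on $e_1$ and $e_2$ have been exchanged, and since each invocation costs $O(1)$ flips the total is $O(1)$.

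The only subtlety I foresee is checking that the hypotheses of Lemma~\ref{lem:el-degree-2-free} remain satisfied between stages: $v_2$ must still be an internal degree-two vertex when we reach the second application, and $v_1$ when we reach the third. This should follow directly from the proof of Lemma~\ref{lem:el-degree-2-free}, where the inserted third bitangent $f$ is flipped at the end into $e$'s former position, so the underlying unlabelled pseudo-triangulation is restored after each invocation -- only labels change. Hence $v_1$ and $v_2$ retain degree two throughout the sequence, and the three applications compose without issue. This structural-restoration property of the previous lemma is the only point that really needs to be confirmed; everything else is bookkeeping.
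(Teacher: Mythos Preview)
Your proposal is correct and matches the paper's approach exactly: the paper states the corollary without proof, remarking only that it follows by ``using an arbitrary free label as placeholder,'' which is precisely your three-application scheme via Lemma~\ref{lem:el-degree-2-free}. Your observations that a free label must exist (because a degree-two internal vertex is pointed, so the pseudo-triangulation is not a full triangulation) and that each invocation of Lemma~\ref{lem:el-degree-2-free} restores the underlying unlabelled structure are both sound and fill in the details the paper leaves implicit.
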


Recall that during a sweep (Lemma~\ref{lem:el-sweep}), each internal vertex has degree two at some point. Since the number of free labels for a pointed pseu\-do-tri\-an\-gu\-la\-tion is equal to the number of internal vertices, this means that we can use Lemma~\ref{lem:el-degree-2-free} to swap every label on a bottom edge incident to an internal vertex with a free label by performing a single sweep. Afterwards, a second sweep can replace these labels on the bottom edges in any desired order. Thus, permuting the labels on bottom edges incident to internal vertices can be done with $O(n)$ flips. Therefore, the difficulty in permuting the labels on all bottom edges lies in bottom edges that are not incident to an internal vertex, that is, chords of the convex hull. If there are few such chords, a similar strategy (free them all and replace them in the desired order) might work. Unfortunately, the number of free labels can be far less than the number of chords.

We now consider operations on maximal groups of consecutive chords, which we call \emph{fans}. As the vertices of a fan are in convex position, fans behave in many ways like triangulations of a convex polygon, which can be rearranged with $O(n \log n)$ flips~\cite{bose2013flipping}. The problem now becomes getting the right set of labels on the edges of a fan.

Consider the internal vertices directly to the left ($\vl$) and right ($\vr$) of a fan $F$, supposing both exist. Vertex $\vl$ has degree two and forms part of the reflex chain of the first pseudo-triangle to the left of $F$. Thus, flipping $\vl$'s top edge connects it to the leftmost vertex of $F$ (excluding $v_0$). Vertex $\vr$ is already connected to the rightmost vertex of $F$, so we just ensure that it has degree two. To do this, we flip all incident edges from vertices further to the right, from the bottom to the top. Now the diagonals of $F$ form a triangulation of a convex polygon whose boundary consists of $v_0$, $\vl$, the top endpoints of the chords, and $\vr$ (see~Figure~\ref{fig:el-indexed-fan}). It is possible that there is no internal vertex to one side of $F$. In that case, there is only one vertex on that side of $F$, which is part of the convex hull, and we can simply use that vertex in place of $\vl$ or $\vr$ without flipping any of its edges. Since there is at least one internal vertex by assumption, either $\vl$ or $\vr$ is an internal vertex. This vertex is called the \emph{index} of $F$. If a vertex is the index of two fans, it is called a \emph{shared index}.

\begin{figure}[htb]
 \centering
 \begin{subfigure}[b]{0.38\textwidth}
  \centering
  \includegraphics{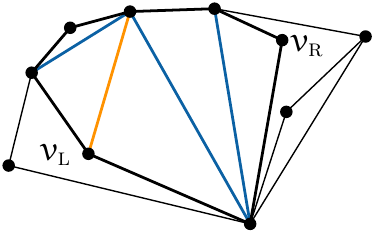}
  \caption{}
  \label{fig:el-indexed-fan}
 \end{subfigure}
 \begin{subfigure}[b]{0.58\textwidth}
  \centering
  \includegraphics{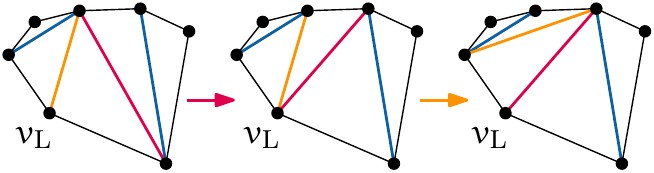}
  \caption{}
  \label{fig:el-increase-index}
 \end{subfigure}
 \caption{(a) An indexed fan. (b) Shifting the index ($\vl$) from the yellow edge to the red edge.}
\end{figure}

A triangulated fan is called an \emph{indexed fan} if there is one edge incident to the index, the \emph{indexed edge}, and the remaining edges are incident to one of the neighbours of the index on the boundary. Initially, all diagonals of $F$ are incident to $v_0$, so we transform it into an indexed fan by flipping the diagonal of $F$ closest to the index. Next, we investigate several operations on indexed fans that help us move labels between fans.
  
\begin{lemma}[Shift]
 In an indexed fan, we can shift the indexed edge to the next diagonal with $O(1)$ flips.
\end{lemma}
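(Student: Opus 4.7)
The plan is to accomplish the shift in just two flips, both taking place inside the convex fan polygon. I will assume without loss of generality that the index is $\vl$; the case of $\vr$ is symmetric after relabelling the chord endpoints right-to-left. Write the top endpoints of the chords of $F$ as $u_1, \ldots, u_k$ from left to right, so the fan polygon has boundary $v_0, \vl, u_1, \ldots, u_k, \vr$. If the current indexed edge is $\vl u_j$, then the indexed-fan condition together with the requirement that the diagonals triangulate the polygon forces the remaining diagonals to consist of an ``inner'' fan $u_1 u_3, u_1 u_4, \ldots, u_1 u_j$ at the neighbour $u_1$ of $\vl$, together with an ``outer'' fan $v_0 u_j, v_0 u_{j+1}, \ldots, v_0 u_k$ at the other neighbour $v_0$. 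The target is the analogous indexed fan with indexed edge $\vl u_{j+1}$.

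First I would flip $v_0 u_j$, which is adjacent to the triangles $v_0 \vl u_j$ and $v_0 u_j u_{j+1}$; together they form the convex quadrilateral $v_0 \vl u_j u_{j+1}$, whose other diagonal is $\vl u_{j+1}$, and that is what the flip inserts. At this intermediate stage $\vl$ temporarily has two incident diagonals, so the fan is not yet indexed. Next I would flip the old indexed edge $\vl u_j$: its new adjacent triangles are $\vl u_1 u_j$ (on the $u_1$ side) and $\vl u_j u_{j+1}$ (created by the first flip), so the relevant quadrilateral is $\vl u_1 u_j u_{j+1}$ and the flip replaces $\vl u_j$ by $u_1 u_{j+1}$. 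The resulting set of diagonals is exactly $\vl u_{j+1}$ together with the inner fan $u_1 u_3, \ldots, u_1 u_{j+1}$ and the outer fan $v_0 u_{j+1}, \ldots, v_0 u_k$, which is the desired indexed fan.

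Since both flips happen inside a convex polygon, every quadrilateral involved is automatically convex and each flip is well-defined, so I do not anticipate a real obstacle. The only subtlety is the corner case $j = 2$, where the inner fan at $u_1$ is empty and the ``triangle'' $\vl u_1 u_j$ is actually the boundary triangle $\vl u_1 u_2$; the quadrilateral $\vl u_1 u_2 u_3$ is still convex and the second flip still produces $u_1 u_3$, so the two-flip recipe goes through unchanged. The total cost is two flips, which is $O(1)$.
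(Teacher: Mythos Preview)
Your proof is correct and is essentially identical to the paper's own argument: the paper also assumes $\vl$ is the index, lets $e$ be the current indexed edge and $f$ the leftmost diagonal incident to $v_0$, and flips $f$ followed by $e$. Your write-up is more explicit about the convex-polygon structure and the $j=2$ corner case, but the two flips you perform are exactly $f=v_0u_j$ and $e=\vl u_j$, in the same order; the paper additionally remarks that reversibility of flips yields the shift in the other direction.
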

\begin{proof}
 Suppose that $\vl$ is the index (the proof for $\vr$ is analogous). Let $e$ be the current indexed edge, and $f$ be the leftmost diagonal incident to $v_0$. Then flipping $f$ followed by $e$ makes $f$ the only edge incident to the index and $e$ incident to the neighbour of the index (see~Figure~\ref{fig:el-increase-index}). Since flips are reversible, we can shift the index the other way too.
\end{proof}

\begin{figure}[htb]
 \centering
 \includegraphics{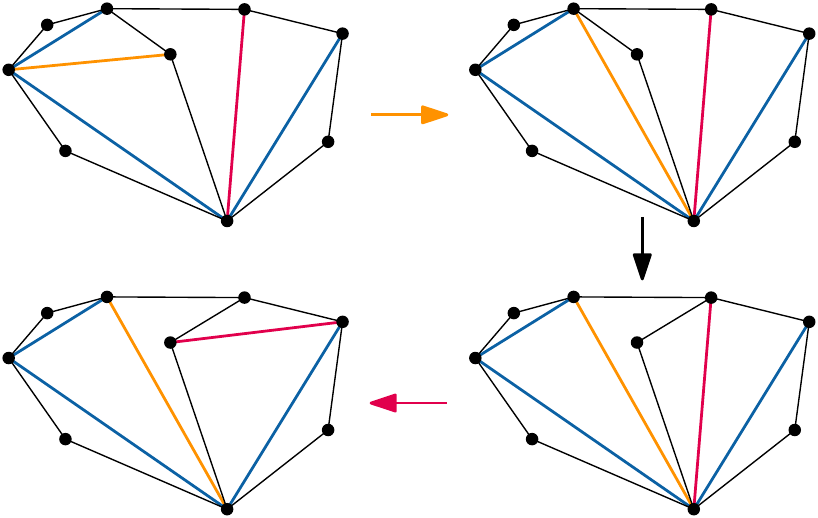}
 \caption{Changing which side a shared index indexes.}
 \label{fig:el-swap-shared-index}
\end{figure}

\begin{lemma}
 We can switch which fan a shared index currently indexes with $O(1)$ flips.
\end{lemma}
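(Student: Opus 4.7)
The plan is to accomplish the switch in three flips. Let $v$ be the shared index, let $F_1$ be the fan to $v$'s left with rightmost top-endpoint $u$, and let $F_2$ be the fan to $v$'s right with leftmost top-endpoint $w$. Suppose $v$ currently indexes $F_1$ as $\vr$, so that $v$'s incident edges are the bottom edge $vv_0$, the top edge $vu$ (which is also a boundary edge of $F_1$'s polygon), and an indexed edge $e_1$ going from $v$ into the interior of $F_1$'s polygon. The goal is to reach a state in which $F_2$ is an indexed fan with $v$ as its $\vl$-index; the reverse direction then follows by the reversibility of flips.

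First, I would flip $e_1$. Since $F_1$ is an indexed fan with $\vr$-type index, this is the inverse of the indexing operation: it returns $F_1$ to the default triangulation in which every diagonal is incident to $v_0$, and it leaves $v$ with only the two edges $vv_0$ and $vu$. Second, I would flip $v$'s top edge $vu$. Because $v$ now has degree two, this flip is well-defined, and by the same tangent argument used in the proof of Lemma~\ref{lem:el-sweep}, the new edge is the tangent from $v$ to the convex hull of the points on the $F_2$ side, namely $vw$. Third, I would flip the diagonal $v_0 w$ of $F_2$ that is closest to $v$; this is exactly the operation that turns a default fan into an indexed fan with $v$ as $\vl$-index, creating an indexed edge at $v$ inside $F_2$. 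The total is three flips, which is $O(1)$.

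The main obstacle is the second step: verifying that flipping $vu$ really yields $vw$. The key observation is that after the first step, removing $vu$ creates a pseudo-quadrilateral in which $v$ is one corner and the pseudo-triangle $T$ on the opposite side has corners that include $u$ and $w$; the edge produced by the flip is the geodesic from $v$ to the corner of $T$ opposite $u$. Because the boundary between $v$ and that corner is a convex chain lying on the global convex hull on the $F_2$ side, the geodesic consists of the single segment $vw$. This mirrors the analysis in Lemma~\ref{lem:el-sweep}, where the same kind of flip at a degree-two vertex in the left-shelling pseudo-triangulation produces the corresponding edge in the right-shelling pseudo-triangulation.
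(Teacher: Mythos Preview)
Your three-flip sequence is exactly the paper's: flip the indexed edge in $F_1$, flip the top edge of $v$, then flip the diagonal of $F_2$ nearest $v$. Your justification of the middle flip (that $vu$ becomes $vw$) is more explicit than the paper's, which simply asserts it.

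There is one imprecision worth flagging. You claim that flipping $e_1$ ``returns $F_1$ to the default triangulation in which every diagonal is incident to $v_0$''. This is only true immediately after the initial indexing; once shifts have been performed (which is the typical situation when this lemma is invoked), some diagonals of $F_1$ are incident to $u$ rather than $v_0$, and flipping $e_1$ does \emph{not} restore the default fan. What \emph{is} always true---and what the paper states---is that flipping $e_1$ connects it to the two boundary neighbours $v_0$ and $u$ of the index (the paper calls this ``parking'' the edge), which reduces $v$ to degree two. That weaker fact is all you actually need for step~2. Similarly, in step~3 you assume $F_2$ is in its default state, whereas in repeated use $F_2$ will itself have a parked edge from the previous switch; fortunately that parked edge is precisely $v_0 w$, so the flip you perform coincides with the paper's ``flip the parked edge'' in either case. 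So your argument is correct, but the description of the intermediate fan states should be loosened to match the general indexed-fan definition rather than the freshly-indexed special case.
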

\begin{proof}
 Flipping the current indexed edge ``parks" it by connecting it to the two neighbours of the index, and reduces the degree of the index to two (see~Figure~\ref{fig:el-swap-shared-index}). Now, flipping the top edge of the index connects it to the other fan, where we parked the previously indexed edge. Flipping that edge connects it to the index again.
\end{proof}

\begin{lemma}
 \label{lem:el-deg-3-to-2}
 In a pointed pseu\-do-tri\-an\-gu\-la\-tion, we can always decrease the degree of a vertex $v$ of degree three by flipping one of the edges incident to its reflex angle.
\end{lemma}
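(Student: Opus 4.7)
My plan is a short angle-counting argument. Let $e_1, e_2, e_3$ be the edges at $v$ taken in cyclic order, and let $\alpha_{12}, \alpha_{23}, \alpha_{31}$ be the three interior angles they bound, summing to $2\pi$. Since $v$ is pointed at least one of these angles is reflex, and since two reflex angles would already sum to more than $2\pi$, exactly one is; say $\alpha_{12}>\pi$. Then $e_1$ and $e_2$ are precisely the two edges incident to the reflex angle at $v$, and because $v$ is an internal vertex all three of its incident edges are internal in the pseu\-do-tri\-an\-gu\-la\-tion and hence flippable.

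The plan is then to flip $e_1$ (the case of $e_2$ is symmetric). This merges the pseudo-triangle containing $\alpha_{12}$ with the pseudo-triangle across $e_1$, whose angle at $v$ is $\alpha_{31}$, into a pseudo-quadrilateral $Q$. The angle of $Q$ at $v$ is $\alpha_{12}+\alpha_{31}>\pi$, so $v$ is a reflex vertex of $Q$, not a corner.

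The flip now replaces $e_1$ by the other bitangent of $Q$. As already used earlier in the pseudo-pentagon analysis, every bitangent of a pseudo-$k$-gon corresponds to a geodesic between two of its corners, so both endpoints of the new bitangent must be corners of $Q$. Since $v$ is a reflex vertex of $Q$, it cannot be an endpoint of the newly inserted edge, and hence the degree of $v$ drops from three to two. The main, but mild, subtlety I anticipate is ruling out that the new straight-line bitangent happens to pass through $v$ as an interior point of the segment — this is a three-point collinearity condition excluded by the standard general-position assumption on the point set.
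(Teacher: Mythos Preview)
There is a genuine gap. The step ``every bitangent of a pseudo-$k$-gon corresponds to a geodesic between two of its corners, so both endpoints of the new bitangent must be corners of $Q$'' does not follow: the bitangent is only the interior-crossing \emph{segment} of that geodesic, and its endpoints are the tangent points on the reflex chains, which are typically reflex vertices, not corners. Indeed, $e_1$ itself is a bitangent of $Q$ with endpoint $v$, and $v$ is reflex in $Q$ --- so your own setup already exhibits a bitangent ending at a non-corner. The other bitangent can likewise end at $v$. Concretely, take $v=(0,0)$, $p_1=(1,1)$, $p_2=(-1,1)$, $p_3=(0.1,1.01)$, $q=(0,-100)$; let the reflex pseudo-triangle at $v$ have corners $p_1,p_2,q$ (with $v$ on the chain $p_1p_2$), and let the triangle $vp_2p_3$ sit across $e_2=vp_2$. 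Removing $e_2$ yields a pseudo-quadrilateral with corners $p_1,p_3,p_2,q$ and unique reflex vertex $v$; the geodesic $p_3\!\leadsto\!q$ is forced around $v$, so the new bitangent is $vq$, and the degree of $v$ stays three.

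This also shows your ``the case of $e_2$ is symmetric'' is false: only one of the two reflex-incident edges is guaranteed to work, and the proof must identify which. The paper does this by looking at the geodesic from $v$ to the opposite corner $c$ of the pseudo-triangle in which $v$ is reflex; the supporting line of that geodesic at $v$ separates the three edges into a singleton and a pair, and the edge to flip is the reflex-incident edge in the pair. The reason is a pointedness obstruction: if the new geodesic (from $c$ to the far corner) passed through $v$, it would have to agree near $v$ with the geodesic $v\!\leadsto\!c$, and inserting an edge in that direction together with the two surviving edges would make $v$ non-pointed --- a contradiction. Your angle-counting only certifies that $v$ is reflex in $Q$, which is necessary but not sufficient; you still need this asymmetric choice and the pointedness argument.
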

\begin{proof}
 Consider the geodesic from $v$ to the opposite corner $c$ of the pseudo-triangle $v$ is pointed in. The line supporting the part of the geodesic when it reaches $v$ splits the edges incident to $v$ into two groups. As there are three edges, one of these groups must contain multiple edges. Flipping the edge incident to its reflex angle in the group with multiple edges results in a geodesic to $c$. If this geodesic passed through $v$, it would insert the missing edges along the geodesic from $v$ to $c$ (otherwise we could find a shorter path). But inserting this geodesic would make $v$ non-pointed. Thus, $v$ cannot be on this geodesic. Therefore the new edge is not incident to $v$ and the flip reduces the degree of $v$.
\end{proof}

Since the index always has degree three, this allows us to extend the results from Lemmas~\ref{lem:el-degree-2-free} and \ref{lem:el-degree-2-swap} regarding vertices of degree two to indexed edges.

\begin{corollary}
 \label{cor:el-index-free}
 In an indexed fan, we can interchange the label of the indexed edge with a free label in $O(1)$ flips.
\end{corollary}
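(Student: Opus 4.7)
The plan is to reduce the index to a degree-two vertex with a single flip, apply Lemma~\ref{lem:el-degree-2-free} to exchange the indexed edge's label with a free label, and then reverse the initial flip to restore the indexed fan. Since each step uses $O(1)$ flips, so does the whole procedure.

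The index has degree three in an indexed fan, with its incident edges being the bottom edge to $v_0$, the polygon-boundary edge connecting it to its fan-neighbour (either $\vl$ or $\vr$'s side of the fan's convex polygon), and the indexed edge itself. By Lemma~\ref{lem:el-deg-3-to-2}, a single flip incident to the reflex angle at the index reduces its degree to two. I would argue that the specific edge identified by this lemma is \emph{not} the indexed edge: the indexed edge lies in the interior of the fan's triangulated convex polygon, opposite the reflex side of the index in the pseudo-triangle it is pointed in, whereas the two boundary edges delimit that side. Hence the flip chosen by Lemma~\ref{lem:el-deg-3-to-2} removes one of the two boundary edges, leaving the indexed edge incident to the now degree-two index.

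With the index of degree two and still incident to the indexed edge, Lemma~\ref{lem:el-degree-2-free} exchanges the indexed edge's label with any chosen free label in $O(1)$ flips. Reversing the initial degree-reduction flip then restores the indexed fan, with the indexed edge now carrying the free label.

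The main obstacle is justifying that the flip supplied by Lemma~\ref{lem:el-deg-3-to-2} preserves the indexed edge. This reduces to a short geometric check of where the geodesic from the index to the opposite corner of its containing pseudo-triangle leaves the index relative to its three incident edges; the two cases according to whether the index is $\vl$ or $\vr$ should be symmetric, and the convex-polygon structure of the fan makes the required configuration transparent.
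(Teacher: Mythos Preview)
Your proof is correct and matches the paper's (implicit) argument: the paper simply notes that the index has degree three, so Lemma~\ref{lem:el-deg-3-to-2} reduces it to degree two and then Lemma~\ref{lem:el-degree-2-free} applies. You correctly identify and verify the one point the paper leaves unstated, namely that the flip supplied by Lemma~\ref{lem:el-deg-3-to-2} necessarily acts on one of the two edges bounding the reflex angle at the index---hence on $\vl v_0$ or $\vl u_1$ and not on the indexed edge, which sits angularly between them inside the convex fan polygon.
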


\begin{corollary}
 \label{cor:el-index-swap}
 Given two indexed fans, we can interchange the labels of the two indexed edges with $O(1)$ flips.
\end{corollary}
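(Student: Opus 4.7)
The plan is to mirror the way Corollary~\ref{cor:el-index-free} extends Lemma~\ref{lem:el-degree-2-free}: use Lemma~\ref{lem:el-deg-3-to-2} to drop each index from degree three to degree two, invoke the two-vertex degree-two swap (Corollary~\ref{lem:el-two-degree-2-swap}) on the two resulting degree-two vertices, and then reverse the degree reductions. Since each ingredient uses only $O(1)$ flips, the total is $O(1)$.

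Concretely, I would first verify that the degree reduction at an index can always be arranged so that the indexed edge survives. An index $v$ has exactly three incident edges: the indexed edge, which is a diagonal of the convex fan polygon, and the two boundary edges of that polygon meeting at $v$. Because the fan polygon is convex, the two sub-angles at $v$ on either side of the indexed edge are both convex, so the only reflex angle at $v$ lies outside the fan, flanked by its two boundary edges. Lemma~\ref{lem:el-deg-3-to-2} therefore applies by flipping one of these two boundary edges, bringing $v$'s degree down to two while leaving the indexed edge in place.

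After performing this reduction independently at both indices $v_1$ and $v_2$, the edges $e_1$ and $e_2$ are each incident to an internal degree-two vertex, so Corollary~\ref{lem:el-two-degree-2-swap} interchanges their labels with $O(1)$ flips. Because that swap only permutes labels and leaves the underlying pseudo-triangulation unchanged, the two degree-reducing flips remain reversible; running them backwards restores both indexed fans, now with $e_1$ and $e_2$ carrying each other's labels.

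The one point that requires care is ensuring that the two reductions do not interfere. Since both fans are currently indexed, their indices are distinct vertices (a shared index can index only one of its fans at a time), and the flip that reduces $v_1$'s degree acts in a pseudo-quadrilateral local to $F_1$ that is disjoint from $v_2$ and the edges of $F_2$. Hence the reductions commute and their reversals can be applied in either order, completing the $O(1)$-flip label swap.
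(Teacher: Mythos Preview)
Your proposal is correct and follows the paper's intended approach: the paper states both corollaries as direct consequences of Lemma~\ref{lem:el-deg-3-to-2} (reducing the degree-three index to degree two) combined with the earlier degree-two results, and you carry this out explicitly. Your observation that the reflex angle at the index lies outside the convex fan polygon, so that Lemma~\ref{lem:el-deg-3-to-2} flips a boundary edge and leaves the indexed edge intact, is exactly the detail needed to make the paper's one-line justification rigorous.

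One minor remark: your non-interference argument for simultaneously reducing both indices is plausible but not fully justified, since the pseudo-quadrilateral produced by flipping a boundary edge of $v_1$ could in principle extend beyond~$F_1$. You can sidestep this entirely by applying Corollary~\ref{cor:el-index-free} three times (swap $e_1$ with a free label, then $e_2$ with the now-free old label of $e_1$, then $e_1$ with the now-free old label of $e_2$); each application is self-contained and restores the pseudo-triangulation before the next begins, so no interaction argument is needed.
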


Now we have enough tools to shuffle the bottom edges.
 
\begin{lemma}[Shuffle]
\label{lem:el-shuffle-non-pointed}
 In the left-shelling pseu\-do-tri\-an\-gu\-la\-tion, we can reorder the labels of all bottom edges with $O(n + h \log h)$ flips, where $h$ is the number of vertices on the convex hull.
\end{lemma}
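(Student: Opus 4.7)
The plan is to treat the two kinds of bottom edges separately. The paragraph preceding the statement already observes that labels on bottom edges incident to internal vertices can be permuted in $O(n)$ flips: one sweep combined with Lemma~\ref{lem:el-degree-2-free} parks each of these labels at a free slot, and a second sweep writes the desired labels back. Hence the real content of the lemma is to permute the labels on the at most $h-3$ chord bottom edges in $O(h \log h)$ further flips, which I propose to do by reducing to the $O(m \log m)$ bound of Bose~\etal~\cite{bose2013flipping} for edge-labelled triangulations of a convex $m$-gon.

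For each fan $F$ I would first reduce $\vr$ to degree two by flipping its incident edges from vertices further to the right, flip $\vl$'s top edge to attach it to the leftmost chord of $F$, and finally flip the diagonal of $F$ closest to the index to obtain an indexed fan. This preparation, as described just before the lemma, converts the diagonals of $F$ into a triangulation of the convex polygon with vertices $v_0, \vl, u_1, \dots, u_{k_F}, \vr$; it costs $O(k_F)$ flips, which sums to $O(h)$ across all fans. The $k_F$ labelled diagonals of $F$ now carry the labels of $F$'s chord bottom edges, and within a single fan the exchange flips among these diagonals behave exactly like flips in a convex-polygon triangulation, so Bose~\etal's bound sorts any permutation of a fan's labels in $O(k_F \log k_F)$ flips, for a total of $O(h \log h)$. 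Labels that must change fans are transferred by shifting both indices until the source and destination labels sit on indexed edges and then applying Corollary~\ref{cor:el-index-swap} to swap them in $O(1)$ flips. A concluding sweep of $O(n)$ flips writes the intended labels onto the internal bottom edges, using the free labels set aside at the beginning.

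The main difficulty is bounding the cost of the inter-fan movement. A naive schedule would pay $\Theta(k_F)$ flips per index shift, and executed over an arbitrary permutation this could blow up to $\Theta(h^2)$. The way I would stay within $O(h \log h)$ is to view the indexed fans, glued together at their shared internal vertices, as a single edge-labelled structure on $\Theta(h)$ labelled diagonals whose flip graph simulates the edge-labelled flip graph of a convex $\Theta(h)$-gon with only constant overhead per flip. Each comparison-swap of Bose~\etal's convex-polygon sort then translates into $O(1)$ real flips, and the entire sort runs in $O(h \log h)$ flips as desired. Making this simulation rigorous -- in particular, showing that the indices can be scheduled so each one walks essentially monotonically along its fan boundary, amortising the shift cost down to $O(1)$ per labelled swap -- is the main technical step.
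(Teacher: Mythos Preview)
Your overall framework --- free the labels on non-fan bottom edges with a sweep, index each fan, use Bose~\etal\ within each fan, and handle the remaining labels with a final sweep --- matches the paper's. Where you diverge is the inter-fan step, and the gap you flag there is real but your proposed fixes do not close it.

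The ``global convex polygon'' simulation is not viable as stated: the collection of indexed fans, even linked by shared indices, does not form a single convex-polygon triangulation, and an arbitrary flip in the hypothetical glued polygon has no $O(1)$-flip analogue in the actual pseudo-triangulation. Your second suggestion, scheduling the indices to move monotonically, is in the right direction, but you call it ``the main technical step'' without supplying it, and your transfer protocol (``shift both indices until the source and destination labels sit on indexed edges'') is precisely the one that blows up to $\Theta(h^2)$.

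The paper sidesteps the whole difficulty by decoupling inter-fan movement from within-fan ordering. Its Stage~2 only asks that each label land in the \emph{fan} it belongs to, ignoring position within the fan. This is achieved by a cycle-sort: look at the indexed edge of any fan $F_i$; if its label $\ell_e$ is foreign, swap it with the indexed edge of its target fan $F_j$ (or with a free label in $B\setminus\nf'$ if $\ell_e\in\nf'$), then shift $F_j$'s index forward to its next foreign edge. Each swap places at least one label in its correct fan, so there are at most $n-1$ swaps; each fan's index moves only forward, so the total shift cost is $\sum_i |F_i| = O(h)$. Only after every label sits in the right fan does the paper invoke Bose~\etal\ separately on each fan for the $O(h\log h)$ within-fan sort. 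The insight you are missing is that getting labels into the right fan, without caring where in the fan, is an $O(n)$ problem on its own --- so there is no need to fold inter-fan movement into the convex-polygon sort at all.
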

\begin{proof}
 In the initial pseu\-do-tri\-an\-gu\-la\-tion, let $B$ and $\free$ be the sets of labels on bottom edges and free labels, respectively. Let $F_i$ be the set of labels on the $i$-th fan (in some fixed order), and let $\nf$ be the set of labels on non-fan bottom edges. Let $F_i'$ and $\nf'$ be these same sets in the target pseu\-do-tri\-an\-gu\-la\-tion. As we are only rearranging the bottom labels, we have that $B = F_1 \cup \ldots \cup F_k \cup \nf = F_1' \cup \ldots \cup F_k' \cup \nf'$, where $k$ is the number of fans.

 We say that a label $\ell$ \emph{belongs to} fan $i$ if $\ell \in F_i'$. At a high level, the reordering proceeds in four stages. In stage one, we free all labels in $\nf$. In stage two, we place each label from $B \setminus \nf'$ in the fan it belongs to, leaving the labels in $\nf'$ free. Then, in stage three, we correct the order of the labels within each fan. Finally, we place the labels in $\nf'$ correctly.

 Since each internal vertex contributes exactly one top edge, one bottom edge, and one free label, we have that $|\nf| = |\free|$. To free all labels in $\nf$, we perform a sweep (see~Lemma~\ref{lem:el-sweep}). As every internal vertex has degree two at some point during the sweep, we can exchange the label on its bottom edge with a free label at that point, using Lemma~\ref{lem:el-degree-2-free}. This requires $O(n)$ flips. The labels in $\free$ remain on the bottom edges incident to internal vertices throughout stage two and three, as placeholders.
 
 To begin stage two, we index all fans with $O(n)$ flips and shift these indices to the first `foreign' edge: the first edge whose label does not belong to the current fan. If no such edge exists, we can ignore this fan for the remainder of stage two, as it already has the right set of labels. Now suppose that there is a fan $F_i$ whose indexed edge $e$ is foreign: $\ell_e \notin F_i'$. Then either $\ell_e \in F_j'$ for some $j \neq i$, or $\ell_e \in \nf'$. In the first case, we exchange $\ell_e$ with the label on the indexed edge of $F_j$, and shift the index of $F_j$ to the next foreign edge. In the second case, we exchange $\ell_e$ with a free label in $B \setminus \nf'$. If this label belongs to $F_i$, we shift its index to the next foreign edge. In either case, we increased the number of correctly placed labels by at least one. Thus $n - 1$ repetitions suffice to place all labels in the fan they belong to, wrapping up stage two. Since we perform a linear number of swaps and shifts, and each takes a constant number of flips, the total number of flips required for stage two is $O(n)$.

 For stage three, we note that each indexed fan corresponds to a triangulation of a convex polygon. As such, we can rearrange the labelled diagonals of a fan $F_i$ into their desired final position with $O(|F_i| \log |F_i|)$ flips~\cite{bose2013flipping}. Thus, if we let $h$ be the number of vertices on the convex hull, the total number of flips for this step is bounded by
 \[ 
  \sum_i O(|F_i| \log |F_i|)~~\leq~~\sum_i O(|F_i| \log h)~~=~~O(h \log h).
 \]
 For stage four, we first return to a left-shelling pseu\-do-tri\-an\-gu\-la\-tion by un-indexing each fan, using $O(n)$ flips. After stage two, the labels in $\nf'$ are all free, so all that is left is to place these on the correct bottom edges, which we can do with a final sweep. Thus, we can reorder all bottom labels with $O(n + h \log h)$.
\end{proof}

This leads to the following bound.

\begin{theorem}
 We can transform any edge-la\-belled pseu\-do-tri\-an\-gu\-la\-tion with $n$ vertices into any other with $O(n \log c + h \log h)$ flips, where $c$ is the number of convex layers and $h$ is the number of vertices on the convex hull.
\end{theorem}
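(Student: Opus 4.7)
The plan is to transform both the source and the target edge-labelled pseudo-triangulation into a common canonical configuration using $O(n \log c + h \log h)$ flips each, and then concatenate the source-to-canonical sequence with the reversal of the target-to-canonical one. For the canonical configuration I would fix the left-shelling pseudo-triangulation together with a predetermined assignment of the labels $\{1, \ldots, 2n - 3 - h\}$ to its $2n - 3 - h$ internal edges (say, the first $n - 3$ labels placed clockwise on the internal bottom edges and the remaining $n - h$ placed in order on the internal top edges), leaving the labels $\{2n - 2 - h, \ldots, 3n - 3 - 2h\}$ free.

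First I would forget the labels and apply the $O(n \log c)$-flip unlabelled algorithm of Aichholzer~\etal~\cite{aichholzer2006transforming} to turn the input into the left-shelling pseudo-triangulation, letting each deletion flip free its edge's label and each insertion flip consume an arbitrary free label. The resulting edge-labelled left-shelling has some used-label set $L$ of size $2n - 3 - h$, not necessarily equal to the canonical one.

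Next I would reconcile $L$ with $\{1, \ldots, 2n - 3 - h\}$ in $O(n)$ additional flips. Performing a single sweep (Lemma~\ref{lem:el-sweep}) brings every internal vertex to degree two at some moment, and at that moment Lemma~\ref{lem:el-degree-2-free} swaps the label on one of its incident edges with any chosen free label in $O(1)$ flips; this lets me move mismatched labels off bottom and top edges that are incident to internal vertices. Mismatched labels on chords of the convex hull I would handle by first building an indexed fan around them and applying Corollary~\ref{cor:el-index-free}. Since the fans can be set up in $O(n)$ flips total and each of the at most $O(n)$ label corrections costs $O(1)$ flips, this stage fits in $O(n)$.

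With the used-label set canonical, I would sort the labels into their canonical positions using the constant-count recipe of Lemma~\ref{lem:el-sort-left-shelling}, but substituting the $O(n + h \log h)$-flip Shuffle of Lemma~\ref{lem:el-shuffle-non-pointed} for the $O(n^2)$ Shuffle of Corollary~\ref{cor:el-shuffle} used in the pointed case. Together with $O(1)$ sweeps of $O(n)$ flips each, this stage costs $O(n + h \log h)$ flips; summing the three stages gives $O(n \log c + h \log h)$ flips to reach the canonical form, and the theorem follows. The main obstacle I expect lies in the reconciliation step: because the unlabelled algorithm is oblivious to labels, some care is needed to argue that every ``wrong'' label can be reached and exchanged for a free one within the $O(n)$ budget, regardless of where it ends up (on a bottom edge incident to an internal vertex, on a chord of the convex hull, or on an internal top edge).
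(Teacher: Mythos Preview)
Your proposal is correct and follows essentially the same three-stage scheme as the paper: reach the left-shelling pseudo-triangulation via Aichholzer~\etal\ in $O(n\log c)$ flips, reconcile the used label set with the canonical one, then sort using Lemma~\ref{lem:el-sort-left-shelling} with the $O(n+h\log h)$ shuffle of Lemma~\ref{lem:el-shuffle-non-pointed}.

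The one place you diverge slightly is the reconciliation step. The paper observes that at most $n-h$ labels can be non-canonical (since that is the size of the free-label pool) and simply spends a full shuffle, costing $O(n+h\log h)$, to push any non-canonical labels sitting on chords over to bottom edges incident to internal vertices, after which a single sweep plus Lemma~\ref{lem:el-degree-2-free} clears them. You instead propose to handle chord labels directly by indexing each fan and walking the index across it, invoking Corollary~\ref{cor:el-index-free} whenever you meet a wrong label; this is a legitimate $O(n)$ argument (setup is $O(n)$, and a single pass of shifts through all fans is $O(h)$), so your bound for this stage is in fact a hair tighter than the paper's, though it makes no asymptotic difference to the final total. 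One small caveat you should state explicitly: the indexed-fan machinery requires at least one internal vertex, so the case $h=n$ must be disposed of separately by citing the convex-polygon bound, exactly as the paper does at the start of Section~\ref{sec:el-general-pts}.
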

\begin{proof}
 Using the technique by Aichholzer~\etal~\cite{aichholzer2006transforming}, we first transform the pseu\-do-tri\-an\-gu\-la\-tion into the left-shelling pseu\-do-tri\-an\-gu\-la\-tion $T$ with $O(n \log c)$ flips. Our canonical pseu\-do-tri\-an\-gu\-la\-tion contains the labels $\{1, \ldots, 2n - h - 3\}$, but it is possible for $T$ to contain a different set of labels. Since all labels are drawn from $\{1, \ldots, 3n - 2h - 3\}$, at most $n - h$ labels differ. This is exactly the number of internal vertices. Thus, we can use $O(n + h \log h)$ flips to shuffle (Lemma~\ref{lem:el-shuffle-non-pointed}) all non-canonical labels on fan edges to bottom edges incident to an internal vertex. Once there, we use a sweep (Lemma~\ref{lem:el-sweep}) to ensure that every internal vertex has degree two at some point, at which time we replace its incident non-canonical labels with canonical ones with a constant number of flips (Lemma~\ref{lem:el-degree-2-free}). Once our left-shelling pseu\-do-tri\-an\-gu\-la\-tion has the correct set of labels, we use a constant number of shuffles and sweeps to sort the labels (Lemma~\ref{lem:el-sort-left-shelling}). Since we can shuffle and sweep with $O(n + h \log h)$ and $O(n)$ flips, respectively, the total number of flips reduces to $O(n \log c + n + h \log h) = O(n \log c + h \log h)$.
\end{proof}

The correspondence between triangulations of a convex polygon and pseu\-do-tri\-an\-gu\-la\-tions, proven in Theorem~\ref{thm:el-lower-bound}, also gives us the following lower bound, as no insertion or deletion flips are possible in this setting.

\begin{theorem}
\label{thm:el-non-pointed-lower-bound}
There are pairs of edge-la\-belled pseu\-do-tri\-an\-gu\-la\-tions with $n$ vertices such that any sequence of flips that transforms one into the other has length $\Omega(n \log n)$.
\end{theorem}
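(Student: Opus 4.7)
The plan is to re-use, essentially verbatim, the convex-position correspondence invoked for Theorem~\ref{thm:el-lower-bound}, together with the observation that in convex position no insertion or deletion flip is legal, so the $\Omega(n\log n)$ lower bound for edge-labelled convex-polygon triangulations~\cite{bose2013flipping} transfers directly.

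Concretely, I would place all $n$ points in convex position, so $h=n$. First I would argue that any pseudo-triangulation of such a point set is just a triangulation of the convex $n$-gon: every diagonal between two hull vertices splits the polygon into two convex sub-polygons, so each face is convex and thus a triangle (a pseudo-triangle with no reflex chains). In particular, each such pseudo-triangulation has exactly $n-3 = 3n-3-2h$ internal edges, matching the label set. Every vertex is pointed (in the outer face), so exchanging flips coincide with ordinary diagonal flips, as already noted in Theorem~\ref{thm:el-lower-bound}.

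Next I would rule out insertion and deletion flips in this setting. Deleting an internal edge of a convex-polygon triangulation merges two triangles into a convex quadrilateral face, which is a pseudo-$4$-gon rather than a pseudo-triangle, so the result is not a pseudo-triangulation; thus no deletion flip is available. By reversibility, no insertion flip is available either (equivalently, the triangulation is already edge-maximal, so any new chord would cross an existing edge). Hence every flip sequence between two edge-labelled pseudo-triangulations of points in convex position uses only exchanging flips and is, under the correspondence above, a flip sequence between the two corresponding edge-labelled triangulations of the convex $n$-gon.

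Finally, I would invoke the $\Omega(n\log n)$ lower bound of Bose~\etal~\cite{bose2013flipping} on the flip distance between edge-labelled triangulations of a convex polygon to produce two such triangulations requiring $\Omega(n\log n)$ flips; the argument above lifts this to the claimed pair of edge-labelled pseudo-triangulations. There is no real obstacle here — the only thing to be careful about is the verification that insertion/deletion flips truly cannot shortcut the distance, but the convex-quadrilateral observation handles this in one line.
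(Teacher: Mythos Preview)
Your proposal is correct and follows essentially the same approach as the paper: reduce to points in convex position, invoke the correspondence with triangulations of a convex polygon from Theorem~\ref{thm:el-lower-bound}, observe that no insertion or deletion flips are legal there, and apply the $\Omega(n\log n)$ bound of~\cite{bose2013flipping}. The paper states this in one sentence without spelling out the convex-quadrilateral argument, so your version is simply a more detailed rendering of the same proof.
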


\section*{Acknowledgments}

We would like to thank Anna Lubiw and Vinayak Pathak for helpful discussions.

\bibliography{references}

\end{document}